\let\c@lofdepth\relax
\let\c@lotdepth\relax
\newcommand{\sys}{\textsc{TWIX}\xspace}
\newcommand*\circled[1]{\tikz[baseline=(char.base)]{
    \node[shape=circle,black,draw,inner sep=0.5pt] (char) {#1};}}
\definecolor{codegreen}{rgb}{0,0.6,0} 
\definecolor{darkblue}{rgb}{0.0, 0.0, 0.5}
\definecolor{darkblueilp}{RGB}{0, 0, 175}
\definecolor{darkred}{rgb}{0.5, 0.0, 0.0}
\definecolor{darkgreen}{rgb}{0.0, 0.5, 0.0}
\definecolor{lightgray}{gray}{0.95}
\definecolor{myForestGreen}{RGB}{34,139,34}
\newcommand{\topic}[1]{\vspace{.5pt} \noindent{\bf #1.}}
\newif\ifshowblock
\newcommand{\sigmod}[1]{#1}
\newcommand{\arxiv}[1]{}
\newcommand{\techreport}[1]{}
\newcommand{\techreportsp}[1]{}
\newcommand{\papertext}[1]{#1}
\newcommand{\appendixtext}[1]{#1}
\newcommand{\delete}[1]{\textcolor{red}{}}
\newcommand{\update}[1]{\textcolor{black}{#1}}
\newcommand{\yiming}[1]{\textcolor{black}{#1}}
\newcommand{\revised}[1]{\textcolor{black}{#1}}
\newcommand{\rone}[1]{\textcolor{black}{#1}}
\newcommand{\rtwo}[1]{\textcolor{black}{#1}}
\newcommand{\rthree}[1]{\textcolor{black}{#1}}
\newcommand{\rmix}[1]{\textcolor{black}{#1}}
\newcommand{\anonymous}[2]{#2} 
\newcounter{definition}
\newenvironment{definition}[1][]{\refstepcounter{definition}\par\smallskip\textsc{Definition~\thedefinition.\ #1}}{\smallskip}
\newcounter{assumption}
\newcommand{\code}[1]{\texttt{\small #1}}
\newcounter{example}
\newenvironment{example}[1][]{\refstepcounter{example}\par\smallskip\textsc{Example~\theexample.\ #1}}{$\square$\smallskip}
\newcounter{theorem}
\newenvironment{theorem}[1][]{\refstepcounter{theorem}\par\smallskip\textsc{Theorem~\thetheorem.\ #1}}{\smallskip}
\newcounter{proposition}
\newenvironment{proposition}[1][]{
    \refstepcounter{proposition}
    \par\smallskip
    \textsc{Proposition~\theproposition.} \textit{#1}
}{
    \smallskip
}
\newcommand{\squishlist}{
	\begin{list}{$\bullet$}
		{
			\setlength{\itemsep}{0pt}
			\setlength{\parsep}{1pt}
			\setlength{\topsep}{1pt}
			\setlength{\partopsep}{0pt}
			\setlength{\leftmargin}{1.5em}
			\setlength{\labelwidth}{1em}
			\setlength{\labelsep}{0.5em} } }
\newcommand{\squishend}{\end{list}}
\lstdefinestyle{SQLStyle}{
  language=SQL,
  showspaces=false,
  basicstyle=\ttfamily\footnotesize,
  commentstyle=\color{gray},
  mathescape=true,
  numbers=none,
  escapeinside={^}{^},
  captionpos=b,
  mathescape=false,
}
\newcommand{\mypython}[1]{%
  \begin{tcolorbox}[
    colback=gray!10,
    colframe=black,
    boxrule=0.5pt,
    arc=2pt,
    left=5pt,
    right=5pt,
    top=2pt,
    bottom=2pt
  ]
  \small\ttfamily #1
  \end{tcolorbox}
}
\begin{document}



\title{Visual Template Inference for Data Extraction from Documents}

\author{Yiming Lin}
\affiliation{%
  \institution{UC Berkeley}
  \city{Berkeley}
  \country{USA}}
\email{yiminglin@berkeley.edu}

\author{Mawil Hasan}
\affiliation{%
  \institution{UC Berkeley}
  \city{Berkeley}
  \country{USA}}
\email{mawil0721@berkeley.edu}

\author{Rohan Kosalge}
\affiliation{%
  \institution{UC Berkeley}
  \city{Berkeley}
  \country{USA}}
\email{rohankosalge@berkeley.edu}

\author{Alvin Cheung}
\affiliation{%
  \institution{UC Berkeley}
  \city{Berkeley}
  \country{USA}}
\email{akcheung@cs.berkeley.edu}

\author{Aditya G. Parameswaran}
\affiliation{%
  \institution{UC Berkeley}
  \city{Berkeley}
  \country{USA}}
\email{adityagp@berkeley.edu}




\begin{abstract}
Many {\em templatized documents} are programmatically generated from structured data following a visual template. Such documents include invoices, tax documents, financial reports, and purchase orders.  Effective data extraction from these documents is crucial to support downstream analytical tasks. 
Current data extraction tools often struggle with complex document layouts, incur high latency and/or cost on large datasets, and require significant human effort. 
The key insight of our tool, \sys, 
is to infer the underlying template 
used to create such documents, \update{and then extract the data, rather than extracting directly from documents. }
To do so, \sys first infers the underlying fields, 
such as columns of tabular portions 
or keys in co-located key-value pairs, 
by leveraging their consistent location 
patterns (e.g., two fields
in the same template repeatedly co-occur
within a fixed distance apart across multiple records). 
\sys then assembles these fields into a template 
by enforcing visual constraints,
such as vertically 
aligning table rows with their column headers
for tabular regions, and 
horizontally aligning keys with their values
for key-value pairs. 
\sys then uses this inferred template to 
accurately and efficiently extract data from templatized documents 
at a low cost. 
\yiming{On one benchmark with 34 diverse real-world datasets,} \sys outperforms state-of-the-art structured data extraction tools (Evaporate, Textract, and Azure Document Intelligence), and vision-based LLMs like GPT-4-Vision, by over 25\% in precision and recall. \yiming{Another benchmark with 30 large datasets demonstrates \sys 's scalability:} 
\update{it is 520$\times$ faster and 3,786$\times$ cheaper than the most competitive compared tool}, for extracting data from large document collections with over 2000 pages. 
\end{abstract}


\begin{CCSXML}
<ccs2012>
   <concept>
       <concept_id>10002951.10003317.10003318.10003319</concept_id>
       <concept_desc>Information systems~Document structure</concept_desc>
       <concept_significance>500</concept_significance>
       </concept>
   <concept>
       <concept_id>10002951.10002952.10003219.10003215</concept_id>
       <concept_desc>Information systems~Extraction, transformation and loading</concept_desc>
       <concept_significance>500</concept_significance>
       </concept>
 </ccs2012>
\end{CCSXML}

\ccsdesc[500]{Information systems~Document structure}
\ccsdesc[500]{Information systems~Extraction, transformation and loading}

\keywords{Document Analytics, Document Data Extraction} 

\received{April 2025}
\received[revised]{July 2025}
\received[accepted]{August 2025}
\maketitle


\section{Introduction}
\label{sec:introduction}


Documents (including PDFs or Word files)
are ubiquitous across organizations, big and small,
and represent some of the most valuable---and yet untapped---sources 
of insight in
data lakes.
A particularly common class of documents
are those that are {\em templatized}:
documents that are programmatically  
generated at scale from structured data by 
employing a {\em visual template}, i.e., visually rendering
each field from a given record
in a specific location on the document, and repeating
this process for each record---see examples in Figures~\ref{fig:police-example} and \ref{fig:invoice}.
This includes tax forms, invoices, financial reports,  
pay stubs, certification records, expense reports, and purchase orders.
Since these documents are programmatically generated, 
it is easy to
generate many pages at scale 
by simply rendering individual records from one or more relations visually. 
Indeed, considering just invoices
alone, recent estimates forecast over 
0.5 trillion invoices generated annually~\cite{invoice}. 
So, {\em given a collection
of templatized documents, can we cheaply, efficiently, 
and accurately extract all structured data from it?}
Doing this will help us effectively 
perform Extract-Transform-Load (ETL) from such document collections,
with the corresponding structured data outputs stored in data warehouses for subsequent analysis.

\begin{figure*}[tb]
    \centering
    \includegraphics[width=1\linewidth]{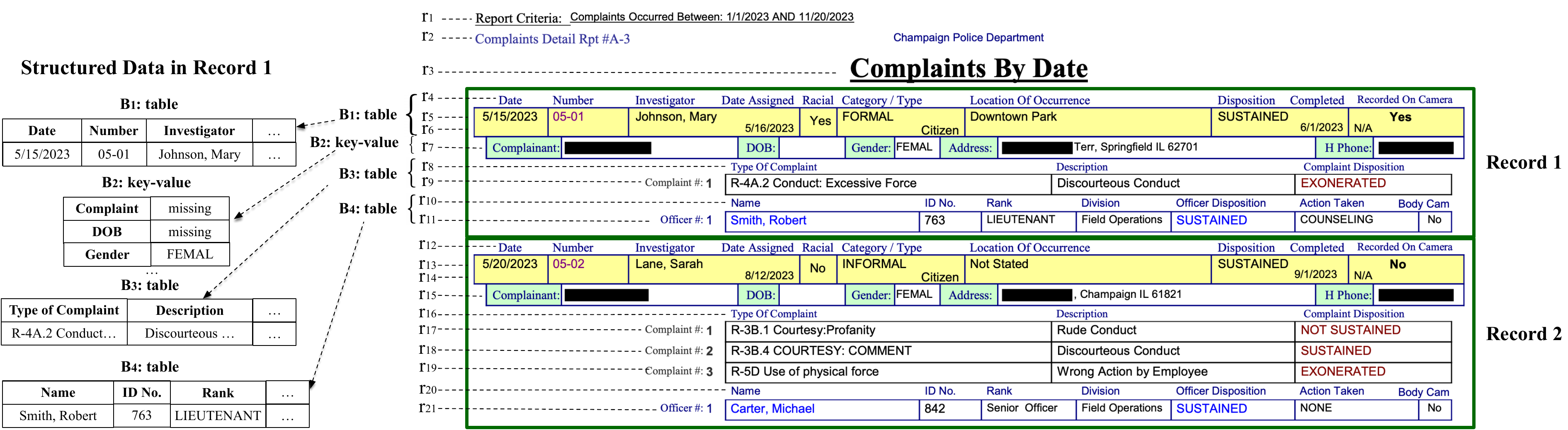}
    \vspace{-1.5em}
    \caption{\small Police complaint records provided by our collaborators. (Actual values have been replaced for privacy.) Row indices $[r_1, r_2, \dots]$ and block labels $[B_1, B_2, \dots]$ are manually annotated, where $B$ stands for a data block. }
    \vspace{-1em}
    \label{fig:police-example}
\end{figure*}

\topic{\rthree{Document data extraction is still a pain point}}
Intuitively, the sheer heterogeneity of templatized documents 
presents challenges for automated approaches.
Templatized documents typically 
contain a mix of tabular and key-value blocks.
Figure~\ref{fig:police-example} shows an example from police complaint records from our collaborators\anonymous{\footnote{Actual name replaced for anonymity.}}{ Big Local News at Stanford University};
here, record 1 and 2 are visually similar 
because they were generated using the same template. This example, as in the invoice example in Figure~\ref{fig:invoice}, contains table and key-value blocks,
with each requiring a distinct approach for data extraction.
In a table block, fields are in the first row (e.g., $\{$\code{Date}, \code{Number}, ...$\}$ in Figure~\ref{fig:police-example}), with corresponding values in each column.  A key-value block usually places fields and their corresponding
values horizontally within a row (e.g., \code{Gender} and \code{Female}). 
Furthermore, different data blocks may be arranged in a nested structure. In Figure~\ref{fig:invoice}, multiple small tables (e.g.,  $B_3$ and $B_4$) are nested within a larger table, $B_2$.  
Thus, to interpret each document, {\bf \em we need to reason about vertical
alignment} {\em (of column names with corresponding values per record in a table)}, as well as
{\bf \em horizontal alignment} {\em (of values per column associated with the same record for a table,
or of keys and their corresponding co-located values in a key-value block)}.

\begin{figure}[tb]
    \centering
    \includegraphics[width=0.8\linewidth]{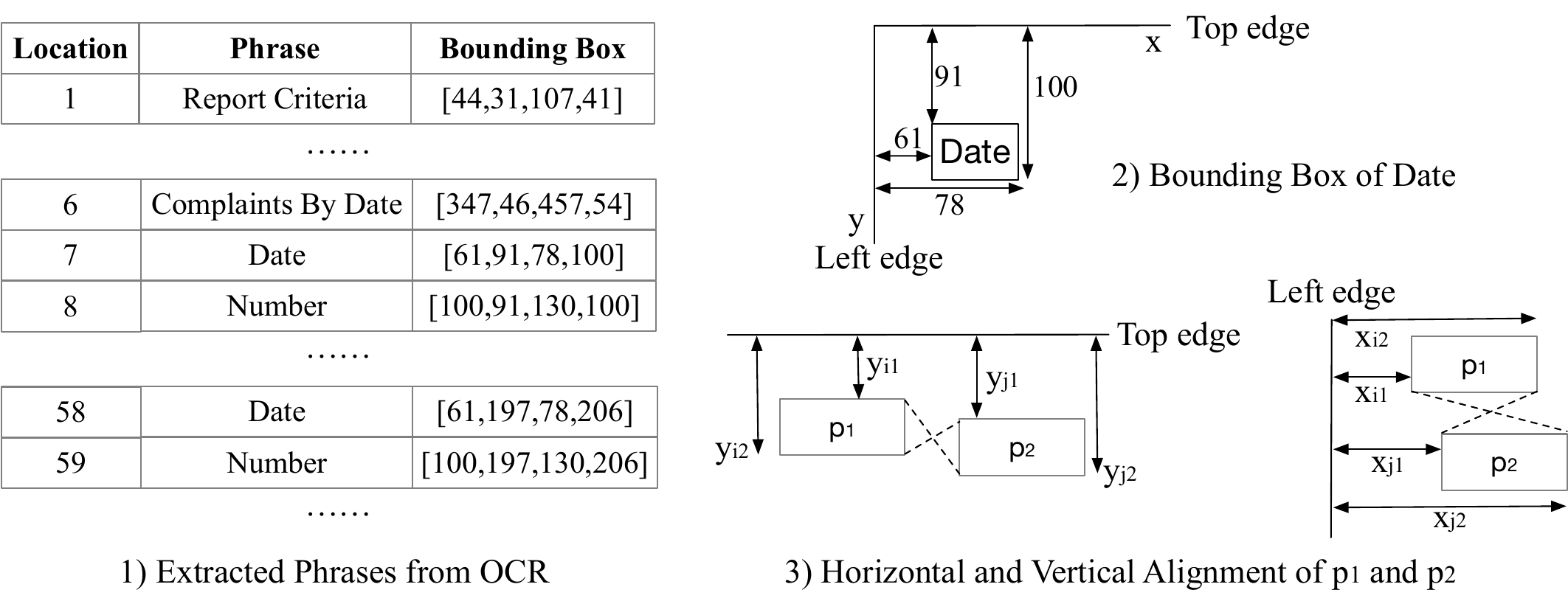}
    \vspace{-1em}
    \caption{\small \rthree{Extracted Phrases with Locations and  Bounding Boxes in Police Complaint Records. }}
    \vspace{-1em}
    \label{fig:phrases}
\end{figure}

Given the heterogeneity, complexity, and need for reasoning about both horizontal and vertical visual alignments, how can we extract structured data from templatized documents?
Past work from the 2000s---mainly from the database community---considered 
HTML-based extraction (e.g., \cite{parameswaran2011optimal,arasu2003extracting,dalvi2009robust}),
which relies on HTML tag hierarchies
that are absent from  documents, 
and general information extraction (e.g., \cite{cafarella2009web, sarawagi2008information, etzioni2004web, agichtein2000snowball}),
which operates on free text without considering visual structure.
Recent efforts on document extraction~\cite{or2021few,tata2021glean,parthasarathy2022landmarks,xu2020layoutlm,aggarwal2021form2seq,sarkhel2021improving} often require users to specify fields and provide labeled examples, leading to high human effort,
and additionally lose relationships between extracted values across fields. 
\techreport{For example, values for \textit{Date} and \textit{Number} in Figure~\ref{fig:police-example} would be extracted independently, without capturing whether the extracted values belong to the same table or record. Furthermore, fields in the document (e.g., \code{Start Date} in Figure~\ref{fig:invoice}) may not be unique, leading to ambiguous results.}
Other approaches, employed by recent document processing systems~\cite{shankar2024docetl,lin2024towards,patel2024lotus,liu2024declarative}, involve vision-based LLMs (e.g., GPT-4 Vision~\cite{gpt4vision})
that can leverage visual features,
a text-based LLM operating on OCR (Optical Character Recognition) output
of a document, 
or pretrained document extraction APIs, such as AWS Textract~\cite{textract} or Azure Document Intelligence~\cite{azure}.
However, these approaches, including those that build on them~\cite{arora2023language},
struggle with complex layouts, achieving only 25\%--65\% precision and recall 
on a benchmark of 34 real-world datasets. 
Most approaches process documents 
page by page, resulting in high latency and cost, e.g.,  GPT-4 Vision takes 30+ hours and 
\$50+ to extract data from 2000+ pages. 
We discuss these approaches in detail in Section~\ref{sec:relatedwork}.

\topic{\sys: Inferring templates prior to extraction} 
We present \sys\footnote{\sigmod{Short for {\bf T}emplatized document {\bf W}rangling for {\bf I}nformation e{\bf X}traction.} \arxiv{Short for {\bf T}emplatized info{\bf R}mat{\bf I}on e{\bf X}traction.}}, a tool for extracting data
from templatized documents that first  {\em reconstructs the template from a set of documents generated using the same template}, and then uses the reconstructed template to extract data from the documents. 
Unlike approaches that apply LLMs or pretrained APIs to each page in a document collection, once the template is inferred, \sys no longer requires such calls and can extract structured data quickly, accurately, and at no cost. This is because the complex document can be decomposed, based on the template, into data blocks with simple structures (tables or key-value pairs). In Figure~\ref{fig:police-example}, every record, like Record~1,  can be decomposed into four blocks: a table in block $B_1$, a list of key-value pairs in $B_2$, followed by two tables in $B_3$ and $B_4$. Extracting data from these simple blocks, once identified, is easier than operating directly on documents. 
Moreover, extraction using templates preserves data relationships \yiming{(e.g., which extracted values belong to the same row)}  without requiring users to specify attributes or provide labeled data.

\rone{A natural question that emerges is: are visual templates already available? Unfortunately, they are often absent in real-world settings. Document creators are typically not the ones analyzing them. For example, our journalism collaborators analyze use-of-force records in Figure 1 produced by police departments, while buyers analyze invoices generated by sellers. 
Specifying templates is also challenging for non-technical users due to the documents’ mixed and heterogeneous  structure~\cite{chasins2018rousillon}. In Figures~\ref{fig:police-example} and~\ref{fig:invoice}, users may struggle to distinguish between tables and key-value pairs or incorrectly flatten nested tables. Even simply specifying fields is difficult: our datasets contain up to 58 fields per record, making manual specification time-consuming and error-prone. }

\topic{\rthree{Template inference remains challenging}}
While inferring a visual template
and then using it for extraction intuitively makes sense, template inference presents challenges in two fronts:  {\em field inference}, 
i.e., inferring which phrases are fields,
and {\em template assembly}, i.e., assembling the 
fields into a template used
to generate records. 
Fields refer to columns in tables, e.g., \code{Date} in $B_1$, or keys in key-value pairs, e.g., \code{DOB} in $B_2$.  
Inferring fields in templatized documents is non-trivial due to complex document layouts. 
Real-world documents often contain a mix of nested key-value and tabular blocks, interspersed with metadata (e.g., headers, titles). 
While one could use visual indicators 
that are easy for a human to understand, e.g., vertical or horizontal alignment, proximity of value phrases to key ones, 
or indentation, these are hard to automatically leverage and develop rules for. 
Moreover, inferring fields alone is insufficient to capture the template, which must also encode the structural pattern, i.e., how fields are organized to form records. 

\begin{figure}[tb]
    \centering
    \includegraphics[width=0.6\linewidth]{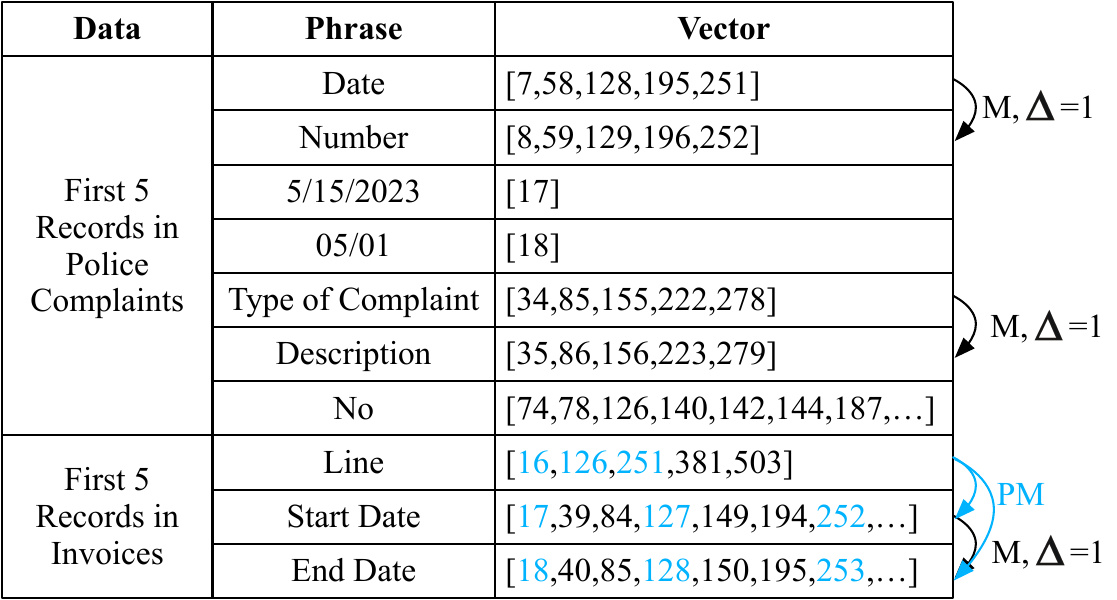}
    \vspace{-1em}
    \caption{\small Location Vectors of Sample Phrases in First 5 Records in Police Complaints and Invoices. M and PM denote Perfect Match and Partial Perfect Match. }
    \label{fig:vector}
\end{figure}
\begin{figure*}[tb]
    \centering
    \includegraphics[width=1\linewidth]{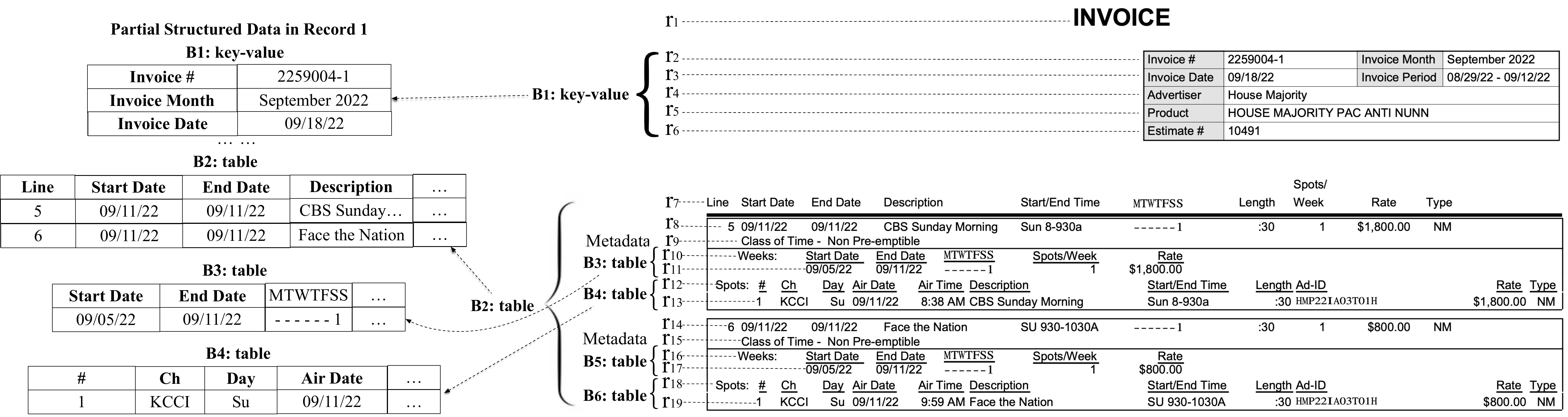}
    \vspace{-2em}
    \caption{\small Portions of Record 1 in an Invoice Document from the Open Benchmark. Row indices $[r_1, r_2, \dots]$ and block labels are manually annotated. There are thousands of invoice records in the document that follow the same template.} 
    \label{fig:invoice}
    \vspace{-1.2em}
\end{figure*} 

\topic{\rthree{Inferring fields based on visual patterns and LLMs}} Our key insight is that {\em fields tend to appear in similar locations across records} (e.g., $\{$\code{Date}, \code{Number}, $\dots$ $\}$ in Figure~\ref{fig:police-example}), 
while {\em the values for the same field 
can be different across records} (e.g., \code{05-01} and \code{05-02} are values of \code{Number} in Record 1 and 2, respectively). 
Consider the phrases extracted by a typical OCR tool~\cite{pdfplumber} in Figure~\ref{fig:phrases}-1, where the location indicates order of extraction (e.g., \code{Report Criteria} and \code{Date} in Record 1 are the 1st and 7th phrases). If we then assemble every location where a given phrase appears into a vector as in Figure~\ref{fig:vector}, for fields like \code{Date} and \code{Number}, the difference (or gap $\Delta$) in their location vectors  is constant, since \code{Number} is always the next extracted phrase to \code{Date} in every record. Among all phrases, typically only a small subset (corresponding to fields) exhibit the consistent location pattern as described above with constant differences (i.e., $\Delta$). Instead, most values (e.g., \code{05-01}) appear randomly across records and their location vectors are often not consistent.  
We propose a clustering approach to group phrases with consistent location patterns. When visual cues prove insufficient (e.g., the value for a field is constant across records), we use the semantic knowledge of LLMs to improve robustness. 
Instead of relying on LLMs to handle complex tasks, such as recognizing intricate visual layouts,
we restrict their role to  simple questions, like "Is this phrase a field or a value?".  

\topic{\rthree{Assembling template based on row labels}} To further assemble the template based on inferred fields, 
our key insight is that a record typically 
{\em consists of multiple data blocks, such as table or key-value blocks, and their placement follows a consistent pattern} across records. To infer the template, \sys assigns each row a label from $\{${\em Key}, {\em Key-Value}, {\em Value}, {\em Metadata}$\}$, each associated with a probability estimated based on the inferred fields. We cast the row labeling problem as one that finds the most probable label per row, constrained by the validity of  table structure. For example,  a {\em Key} (e.g., row $r_4$ in Figure~\ref{fig:police-example}) row must have a vertically-aligned {\em Value} row (e.g., $r_5$) beneath it, and vice versa. While {\em Key-Value} rows must have a key for each value, some keys could have missing values. 
We formalize this {\em row labeling problem} as an Integer Linear Programming problem  (ILP), and prove its {\bf NP-hardness}.  We then design efficient pruning strategies on a small number of rows to infer the row labels and then the template. 

Once the template is inferred, completing data extraction is straightforward. \sys divides templatized document into a list of records, and decomposes each record into a list of data blocks with simple structures (i.e., tables or key-value pairs), each of which is easy to extract data from. Overall, we make the following contributions as part of  \sys, a robust tool for data extraction from templatized documents. 


\begin{itemize}[leftmargin=*]
    \item We introduce the concept of a {\em template} and propose novel algorithms to infer the template for data extraction. Specifically, we develop a clustering-based algorithm that reasons about location patterns, along with LLM input, to infer fields. We further formalize the problem of inferring the template structure as an ILP, show its NP-hardness, and provide an efficient solution.  
    \item We present efficient techniques to extract structured data based on the inferred template at no additional cost. 
    \item We conduct a comprehensive evaluation on two  benchmarks comprising 64 diverse real-world datasets, comparing our approach against six state-of-the-art techniques. \sys achieves around 90\% precision and recall, {\bf outperforming each baseline by over 25\%} in both metrics. We introduced a new metric called structure precision and recall that accounts for the nested and heterogeneous structure, and once again,   \sys \hspace{0.3mm} {\bf outperforms the best baseline by over 31\% (and 22\%) on structure precision (and recall)}.  \sys scales easily to large datasets and is \textbf{520$\times$} \textbf{faster} and \textbf{3786$\times$} \textbf{cheaper} than the most competitive baseline, on document collections with 2000+ pages. 
\end{itemize}

\section{Background and Overview}
\label{sec:background}


We focus on a single document $D$, formed by concatenating all documents $D_1, D_2, \dots, D_n$ that share the same template. Then $D$ consists of records created using the same template (e.g., Records 1 and 2 in Figure~\ref{fig:police-example}) and other metadata (e.g., headers, footers),  where the concepts of record, template, and metadata are defined shortly. This setting is common, e.g., invoices, purchase orders, tax documents, financial reports, and immigration forms. 


\topic{Phrases} 
We operate on the serialized plain text representation of the concatenated document $D$. Let $P$ be the phrases extracted from $D$ by using Optical Character Recognition (OCR) tools~\cite{pdfplumber,pymupdf},\footnote{Our approach is suitable for documents where phrases and their bounding boxes can be extracted, such as PDFs, Word documents, and  images of scanned documents.} in ascending order of location, i.e., $P = [p_1,p_2,...,p_m]$. Figure~\ref{fig:phrases}-1 presents phrases extracted from the police complaint document in Figure~\ref{fig:police-example}. 
Each phrase $p_i$ is paired with its bounding box $b_i = [x_{i1},y_{i1},x_{i2},y_{i2}]$ shown in Figure~\ref{fig:phrases}-1. Here, $x_{i1}$ and $x_{i2}$ represent distances from the left and right edges of $p_i$ to the left edge of the page, respectively. Similarly, $y_{i1}$ and $y_{i2}$ denote distances from the top and bottom edges of $p_i$ to the top edge of the page, respectively. Figure~\ref{fig:phrases}-2 presents the bounding box of the first \code{Date} (\code{Date} in Record 1). Each phrase $p_i$ has its {\em location} $i$ in the document, denoted as $loc(p_i)$, determined by the OCR tool, which extracts phrases row-by-row from left to right, as shown in Figure~\ref{fig:phrases}-1. For instance, \code{Report Criteria} and \code{Date} (in Record 1 in Figure~\ref{fig:police-example}) are the first and seventh phrases extracted with location $1$ and $7$, respectively. \rone{\sys supports both well-formed and scanned documents. In our Q-Benchmark, scanned (well-formatted resp.) documents occupy 22\% (78\% resp.). \sys can handle any document as long as the OCR tool can extract phrases and their bounding boxes.} 

\topic{Rows and Blocks} 
A row is a list of horizontally aligned phrases.  \rthree{Two phrases $p_i$ and $p_j$ (e.g., $p_1$ and $p_2$ in Figure~\ref{fig:phrases}-3),  with bounding boxes $[x_{i1}, y_{i1}, x_{i2}, y_{i2}]$ and $[x_{j1}, y_{j1}, x_{j2}, y_{j2}]$, are visually {\em horizontally aligned}, if $y_{i1} \leq y_{j2} \land y_{i2} \geq y_{j1}$, and are {\em vertically aligned}, if $x_{j1} \leq x_{i2} \land x_{j2} \geq x_{i1}$.} We then  transform phrases $P = [p_1, p_2, \dots]$ to rows $R = [r_1, r_2, \dots]$ greedily by scanning $p_i \in P$ in increasing order of location. A phrase $p_i$ is merged into an existing row $r$ if $p_i$ is horizontally aligned with every phrase in $r$; otherwise, a new row is created for $p_i$. We show all rows in Figure~\ref{fig:police-example}. \techreport{When a phrase $p$ is horizontally aligned with multiple rows, such as ``\code{Yes}'', the first value under column \code{Racial} in Figure~\ref{fig:police-example}, $p$ is merged into the first row it is horizontally aligned with, e.g., \code{Yes} is part of $r_5$ instead of $r_6$.} Finally, a block $B = [r_i, r_{i+1}, \dots, r_j]$ is a list of rows in $D$.

\begin{figure}[tb]
    \centering
    \includegraphics[width=1\linewidth]{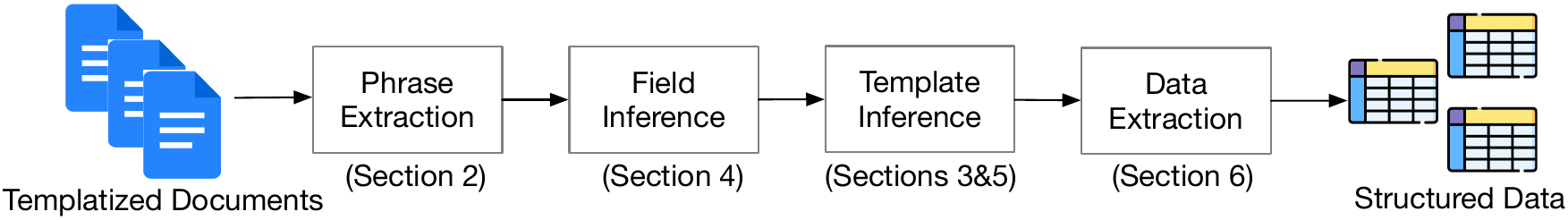}
    \vspace{-2em}
    \caption{\small Overview of the \sys Pipeline. }
    \vspace{-1.5em}
    \label{fig:pipeline}
\end{figure}

\topic{Phrase Labels}
We assign each phrase $p \in P$ one of the $\{field, value, metadata\}$ labels. {\em Fields} refer to the column names of tables or keys in key-value pairs. A {\em value} is a phrase whose corresponding field can be identified, such as a cell in a table row, or a value in a key-value pair. Note that the same phrase may appear multiple times in $P$ (e.g., \code{Date} in police complaints appears in row $r_4$ and $r_{12}$). 
Other non-field phrases whose fields cannot be identified are called \textit{metadata}, such as the title ``\code{Complaints By Date}'', or headers, e.g., phrases in row $r_1$ and $r_2$ in Figure~\ref{fig:police-example}. \techreport{We assume that the label for each phrase $p \in P$ is unique. For instance, a phrase cannot be both a field and the value of another field, a scenario that we rarely observe in real-world datasets.}

\topic{\rthree{Row and Block Labels}} 
\rthree{A row $r_i$ is assigned one of the $\{${\em Key}, {\em Key-Value}, {\em Value}, {\em Metadata}$\}$ labels. $r_i$ is a {\em Key} row if it contains no values and each field $p \in r_i$ has at least a vertically aligned value $p'$ beneath $p$. Similarly, a {\em Value} row $r_i$ contains no fields, and each value has a vertically aligned field preceding it. $r_i$ is a {\em Metadata} row if any phrase in $r_i$ is metadata, and in a {\em Key-Value} row, any value has a preceding field. } 
\rthree{We now define a {\em table} or a {\em key-value} block. A block $B = [r_i, r_{i+1}, \dots, r_j]$ is a table block if $r_i$ is a Key row and other rows in $B$ are all Value rows, where for any value $p \in r_j$ with $j > i$, there exists a field $p' \in r_i$ such that $p$ is vertically aligned with $p'$. $B$ is a key-value block if any row $r_i\in B$ is a Key-Value row.}




 
\topic{\update{Overall \sys pipeline}} \update{Figure~\ref{fig:pipeline} shows an overview of \sys. {\bf \circled{1} Phrase Extraction}: Given a concatenated templatized document $D$, \sys applies OCR tools to convert $D$ into a list of phrases, each paired with its bounding box and location (e.g., part of the extracted phrases in police complaint records is shown in Figure~\ref{fig:phrases}-1). {\bf \circled{2} Field Inference}: \sys clusters the extracted phrases to group phrases that share consistent location patterns into fields, and uses LLMs to provide semantic knowledge and enhance robustness (Section~\ref{sec:field-prediction}).  {\bf \circled{3} Template Inference}: We model a template as a tree (Section~\ref{sec:template}). Given the inferred fields, \sys infers the row labels as one of $\{${\em Key}, {\em Key-Value}, {\em Value}, {\em Metadata}$\}$.  
We solve the row labeling problem using Integer Linear Programming, with the goal to find the most probable label assignments constrained by visual alignments 
(Section~\ref{sec:template-prediction}). {\bf \circled{4} Data Extraction}: We use the inferred template to extract data from the entire document $D$  (Section~\ref{sec:data-extraction}).} \sys is implemented in 1200+ lines of Python. 


\section{Template Formalization}
\label{sec:template}

We now formally define templates.
A document $D$ consists of a set of {\em records}, each generated using a {\em template}. Each record recursively consists of {\em blocks}. 
Let $T$ be the {\em template} used to generate records, where $T=(V,E)$ is an ordered directed tree with an artificial root node. Each non-root node  $v\in V -  \{root\}$ is associated with a \textit{type} $\in \{Table$, $Key$-$Value\}$, and a set of \textit{fields}.
In Figure~\ref{fig:template}, we present the templates for police complaints and invoices, respectively, where the type and fields for each node are specified. Intuitively, a template defines how a document is populated with records by specifying which fields are populated, in what manner (e.g., as a table or a key-value block), and order (e.g., which block appears first).

For a node $v_i \in V -  \{root\}$, we denote $v_i \rightarrow B_i$ to be the process of generating a data block $B_i$ by populating fields in $v_i$. $B_i$ is a sequence of rows, where every {\em field}  in  $v_i.fields$ appears exactly once in $B_i$ and all of the {\em value} phrases that correspond to those field phrases appear in $B_i$. 
Let $Rec_i$ be a record created from template $T$, comprising a sequence of data blocks generated by nodes in $T$. To do so, each node $v_i$ in $T$ generates one or more data blocks in the predefined order in $T$. 

\begin{example}
Block $B_1$ in police complaint record $Rec_1$ corresponds to $[r_4,r_5,r_6]$ in Figure~\ref{fig:police-example}. In its template  in Figure~\ref{fig:template}-1, $B_1$ is generated by populating the fields from the table node $v_1$ in Figure~\ref{fig:template}-1. A single record then 
comprises one or more data blocks generated by each of $v_1$, $v_2$, $v_3$ and $v_4$ in order. In Figure~\ref{fig:invoice}, $B_3$ and $B_5$ in invoice documents are both generated by $v_3$ in Figure~\ref{fig:template}-2, while $v_2$ generates $B_2$. Note that $v_2$ is the parent of $v_3$, indicating that $v_3$'s data blocks $\{B_3,B_5\}$ are nested within $v_2$'s block $B_2$, as we will discuss below. 
\end{example}

\begin{figure}[tb]
    \centering
    \includegraphics[width=1\linewidth]{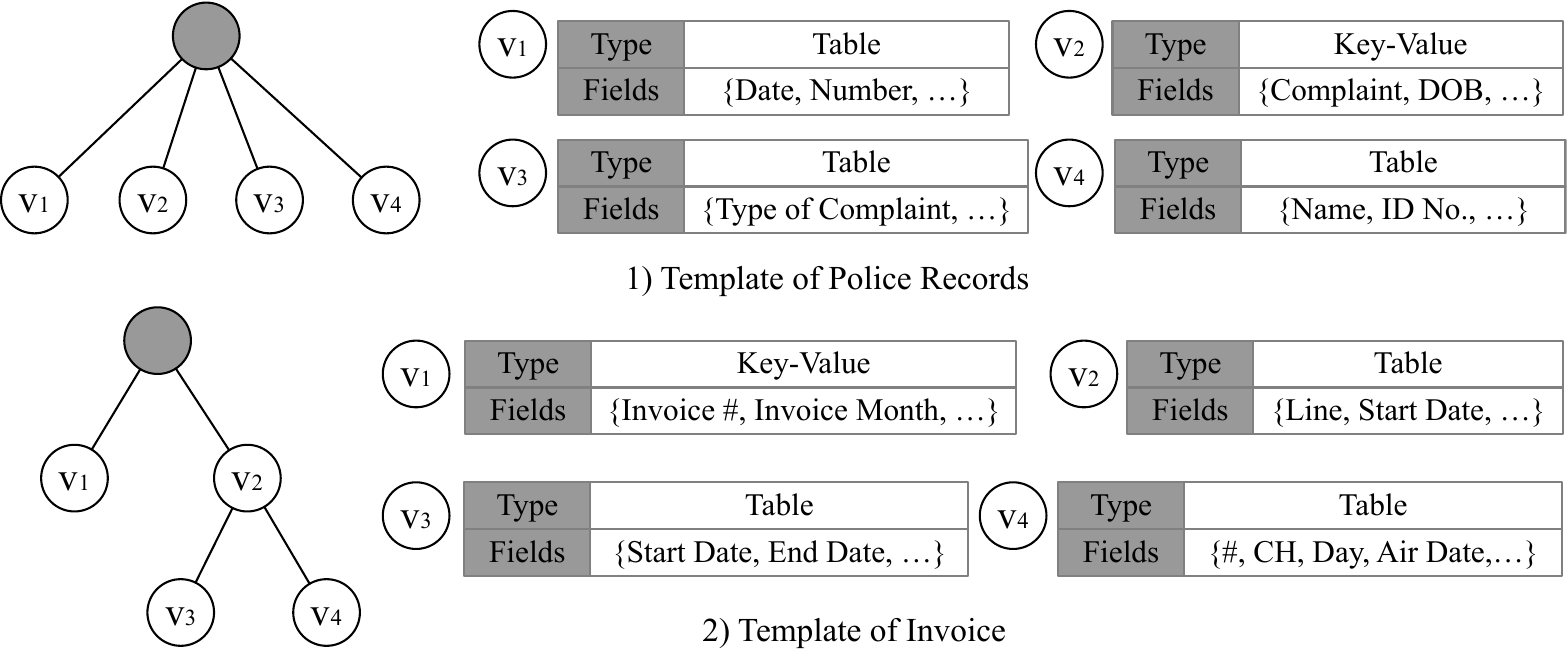}
    \vspace{-2em}
    \caption{\small Templates for Police Records and Invoice Document. }
    \vspace{-1.5em}
    \label{fig:template}
\end{figure}

Let $\mathcal{B}_v^{Rec}$ be the set of blocks generated by the node $v$ in the record $Rec$, i.e., $\mathcal{B}_v^{Rec} = \{B_i | v \rightarrow B_i, B_i \in Rec\}$. In the police 
complaint record in Figure~\ref{fig:police-example}, $\mathcal{B}_{v_1}^{Rec_1} = \{B_1\}$, while $\mathcal{B}_{v_3}^{Rec_1} = \{B_3,B_5\}$ in the invoice document in Figure~\ref{fig:invoice}. Let $loc(\mathcal{B}_v^{Rec})$ be the location of $\mathcal{B}_v^{Rec}$, defined as the smallest (earliest) location of  phrase $p$ in any block in $\mathcal{B}_v^{Rec}$, i.e., $loc(\mathcal{B}_v^{Rec}) = \min{loc(p)}, p\in B_i, \forall B_i \in \mathcal{B}_v^{Rec}$.

Two data blocks may overlap (e.g., $B_2$ and $B_3$ in Figure~\ref{fig:invoice}) \update{if their visual bounding boxes overlap}, while others may not (e.g., $B_1$ and $B_2$ in Figure~\ref{fig:police-example}). To formally define overlapping relationships between data blocks, for two blocks $B_i$ and $B_j$, we denote $B_i \cap B_j = \emptyset$ if $\forall p_{i_1}, p_{i_2} \in B_i$ and $p_{j_1}, p_{j_2} \in B_j$, such that $(i_1 > j_1) \oplus (i_2 > j_2) = 0$, where $i_1,i_2,j_1,j_2$ are phrase  locations, and $\oplus$ denotes a logical XOR. Otherwise, $B_i \cap B_j \neq \emptyset$. Intuitively, $B_i \cap B_j = \emptyset$ implies that the visual bounding boxes of the two blocks do not overlap, such as any two blocks in  police complaint records. In the invoice records in Figure~\ref{fig:invoice}, $B_1$ $\cap$ $B_2 = \emptyset$, while $B_2 \cap B_3 \neq \emptyset$ since phrases from $B_2$ are placed both before and after $B_3$, e.g.,  row $r_8$ appears before $B_3$, while row $r_{14}$ appears after $B_3$.
Given a record $Rec$, and $v_i,v_j \in V$, we denote $\mathcal{B}_{v_i}^{Rec} \cap \mathcal{B}_{v_i}^{Rec} = \emptyset$ if $\forall B_i \in \mathcal{B}_{v_i}^{Rec}, B_j \in \mathcal{B}_{v_j}^{Rec}$, $B_i \cap B_j = \emptyset$.

Now we describe how edges in the template $T$ influences overlap of data blocks.  For any two nodes $v_i, v_j \in V - \{root\}$, when $v_i$ is an {\em ancestor} of $v_j$, then $\forall Rec ,loc(\mathcal{B}_{v_i}^{Rec}) < loc(\mathcal{B}_{v_j}^{Rec})$ and $\mathcal{B}_{v_i}^{Rec} \cap \mathcal{B}_{v_i}^{Rec} \neq \emptyset$. Likewise, when $v_i$ is a {\em left sibling} of $v_j$,  then $\forall Rec, loc(\mathcal{B}_{v_i}^{Rec}) < loc(\mathcal{B}_{v_j}^{Rec})$ and $\mathcal{B}_{v_i}^{Rec} \cap \mathcal{B}_{v_i}^{Rec} = \emptyset$.




\begin{example}
In Figure~\ref{fig:template}-1,  the non-root nodes are placed at the same level from left to right, indicating that their corresponding data blocks $B_1$, $B_2$, $B_3$, and $B_4$ for the given record do not overlap.  In contrast, the template in Figure~\ref{fig:template}-2 shows that node $v_2$ is an ancestor of $v_3$, implying that the data blocks of $v_3$, $\mathcal{B}^{Rec}_{v_3} = \{B_3, B_5\}$, overlap with the data blocks of $v_2$, $\{B_2\}$, and $B_2$ appears before $B_3$ and $B_5$. Note that the sequence of data blocks corresponding to the subtree rooted at $v_2$ can be repeated, provided the sequence of data blocks for its children complies with the edge definition above.
\end{example}


Given the concepts of templates, data blocks, and records, we next describe how \sys infers the template (Section~\ref{sec:template-prediction}) by first identifying fields (Section~\ref{sec:field-prediction}), and then extracts data (Section~\ref{sec:data-extraction}).

\section{Field \update{Inference}}
\label{sec:field-prediction}


\sys first infers a set of fields given the extracted phrases $P$ from the document $D$, as the set of fields often appear in similar locations across records in templatized documents.

\vspace{-2mm}
\subsection{Location Vectors and Matches}
Since the same phrase $p$ (e.g., \code{Date} in Figure~\ref{fig:phrases}-1) may appear multiple times in $P$, such as $p_7, p_{58}$, we denote $v_p = [i, \dots, j]$ as the {\em location vector} of $p$, comprising the locations of occurrences of $p$ in ascending order. Figure~\ref{fig:vector} lists the location vectors for sample phrases in the first five records in police complaints and invoices. \code{Date} appears exactly once in each record, corresponding to the list of phrases $[p_7, \dots, p_{251}]$, and thus has the location vector $[7, \dots, 251]$. Let $L_p$ be the length of the vector $v_p$. 

To formalize the intuition that two related fields $p_i$ and $p_j$   share similar locations, their location vectors are often {\em related by a constant shift}. In Figure~\ref{fig:vector}, adding one to each entry in the location vector for \code{Date} aligns it with the location vector for \code{Number}. This is because \code{Date} and \code{Number} are fields in the same table node, and thus appear in sync in the blocks corresponding to that node, resulting in constant relative distances across records. We define the concept of {\em perfect match} to capture this observation.


\begin{definition}[Perfect Match.]
\label{def:perfect-match}
     Let $v_{p_i} = [i_1,i_2,...]$ and $v_{p_j} = [j_1,j_2,...]$ be the location vectors of phrases $p_i$ and $p_j$. We say $p_i$ is a {\em perfect match} with $p_j$, denoted by $M(v_{p_i},v_{p_j}) = 1$, if $L_{p_i} = L_{p_j}$, $L_{p_j} > 1$ and $\exists \Delta$, s.t., $\forall i_k \in v_{p_i}, j_k \in v_{p_j}, |i_k - j_k| = \Delta$. 
\end{definition}

When $L_{p_j} = 1$, any two phrases will be a perfect match, and thus we enforce $L_{p_j} > 1$ above. \update{When a field (e.g., \code{Start Date} in Figure~\ref{fig:invoice}) appears in multiple data blocks ($B_2,B_3$) within a record, its location vector may not be perfectly aligned with that of another field (e.g., \code{Line} in $B_2$) that appears in a single data block, even though both fields also share similar location patterns.} Thus, we further relax this notion below. 



\begin{definition}
    [Partial Perfect Match.] Let $v_{p_i} = [i_1,i_2,...]$ and $v_{p_j} = [j_1,j_2,...]$ be the location vectors of phrases $p_i$ and $p_j$. We say $p_i$ is a partial perfect match with $p_j$, denoted by $PM(v_{p_i},v_{p_j})=1$, if there exists a subsequence $v_{p_i}'\subseteq v_{p_i}$, $M(v_{p_i}',v_{p_j})=1$. 
\end{definition}


\begin{example}
\label{eg:match}
    In Figure~\ref{fig:vector},  any pair of phrases among  \code{Date}, \code{Number}, \code{Type of Complaint}, and \code{Description} is a perfect match, while \code{5/15/2023}, \code{05/01} and \code{No} are not a perfect match with any other phrases. In Figure~\ref{fig:invoice}, the same field (e.g.,  \code{Start Date}) may appear in multiple blocks (e.g., $B_2$, $B_3$, and $B_5$) within a record, while some phrases, such as \code{Line}, only appear in one block (e.g., $B_2$). In this case, \code{Line} is a partial perfect match with \code{Start Date} because a subsequence of the location vector for \code{Start Date} perfectly matches that of \code{Line}. This subsequence corresponds to the appearances of \code{Start Date} in $B_2$ every record. 
\end{example}

\vspace{-1mm}
\subsection{Analysis of Location-based Matches}
\label{subsec:location-analysis}
\update{We have shown previously that similar location patterns between fields can be captured by (partial) perfect matches based on their location vectors.}  Intuitively, most location vectors are ``irregular'', while a small amount of them are ``regular'' (i.e., sharing perfect or partial matches). Informally, regular vectors are fields that tend to co-occur together (i.e., appear in documents near each other), whereas irregular vectors correspond to values or metadata that occur randomly. \update{Now, we establish two properties of the match functions above to show their effectiveness in inferring fields.}

Given document $D$, let $T' = (V',E')$ be the \textit{true} template for $D$, and let $F'$ be the corresponding set of {\em fields}. 

\begin{proposition}
\label{prop:table}
    Given the true template $T' = (V', E')$, when there exists a unique node  $v\in V', v.type = table$, $p_i,p_j \in v.fields$, if $L_{p_i} = L_{p_j}$, then $p_i$ and $p_j$ are a perfect match; if $L_{p_i} > L_{p_j}$, then $p_j$ and $p_j$ are a partial perfect  match. 
\end{proposition}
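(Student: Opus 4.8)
The plan is to exploit the rigid layout of a table block: because $v$ is the \emph{unique} table node, every block it generates shares an identical header row, and the constant index spacing of fields inside that header is exactly what forces the perfect-match / partial-perfect-match behavior. First I would fix notation. Let $B_1, B_2, \dots, B_K$ be the data blocks generated by $v$ across the whole document $D$, listed in document order, so that their starting indexes increase. By the block-generation definition ($v \rightarrow B$), each field of $v$ appears exactly once in each $B_t$, hence $p_i$ and $p_j$ each occur precisely once per block; denote by $a_t$ and $b_t$ the indexes $ind(\cdot)$ of the occurrences of $p_i$ and $p_j$ inside $B_t$. Since the blocks are disjoint and ordered, $a_1 < a_2 < \cdots < a_K$ and $b_1 < b_2 < \cdots < b_K$.

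The key step is to show that $|a_t - b_t| = \Delta$ for a single constant $\Delta$ independent of $t$. Here I would argue that the header rows of $B_1, \dots, B_K$ all contain the same fields $v.fields$ in the same left-to-right order, which is what it means for the blocks to be populated from one node with a fixed column layout. Because the OCR index advances by one for each phrase read left-to-right within a row, the number of phrases separating $p_i$ and $p_j$ in the header is identical in every block, giving $b_t - a_t = \Delta$ (WLOG $p_j$ follows $p_i$, so $\Delta > 0$ is the fixed column gap). This constant offset is the structural heart of both claims.

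For the first claim, suppose $L_{p_i} = L_{p_j}$. Since $p_i, p_j$ are fields of the \emph{unique} table node, their only occurrences are these header appearances: by the label-uniqueness assumption a field phrase never occurs as a value or metadata, and there is no second table node to contribute further headers. Equal length then forces $L_{p_i} = L_{p_j} = K$, with $K > 1$ because records repeat. Hence $v_{p_i} = [a_1, \dots, a_K]$ and $v_{p_j} = [b_1, \dots, b_K]$ satisfy $|a_t - b_t| = \Delta$ for all $t$, so $M(v_{p_i}, v_{p_j}) = 1$ and $p_i, p_j$ are a perfect match. For the second claim, suppose $L_{p_i} > L_{p_j}$. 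The extra occurrences of $p_i$ must arise from $p_i$ also being a field of some non-table (key-value) node, while $p_j$ still appears exactly $K$ times, only in $v$. The $K$ header occurrences of $p_i$ nonetheless form a subsequence $v_{p_i}' = [a_1, \dots, a_K] \subseteq v_{p_i}$ (ascending order is preserved), and $M(v_{p_i}', v_{p_j}) = 1$ by the same constant-offset argument; thus $PM(v_{p_i}, v_{p_j}) = 1$, a partial perfect match.

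The main obstacle is rigorously establishing the constant-offset claim in the presence of nesting and overlap. That the header's internal index gap is constant relies on the header being a single, contiguous, self-contained row whose contents are identical across all blocks of $v$; nested or overlapping blocks (as in Figure~\ref{fig:invoice}) or interleaved metadata could in principle split the header or insert phrases between $p_i$ and $p_j$, breaking contiguity. I would discharge this by invoking the uniqueness of the table node (no competing header rows) together with the assumption that a table node's fields occupy a single header row, so that nesting attaches beneath the header rather than within it. A secondary subtlety, handled again via label-uniqueness, is justifying in the first claim that equal vector length genuinely rules out compensating stray occurrences of $p_i$ and $p_j$ outside $v$.
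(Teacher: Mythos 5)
Your proof follows essentially the same route as the paper's: use label uniqueness to argue that when $L_{p_i}=L_{p_j}$ all occurrences lie in the header rows of blocks generated by $v$, derive a constant index offset from the fixed schema of those headers, and for $L_{p_i}>L_{p_j}$ restrict to the subsequence of $p_i$'s occurrences in $v$'s blocks and reapply the first part. Your version is somewhat more explicit than the paper's about why the offset is constant (contiguity of the header row under the OCR indexing) and about the subtleties that equal vector length must rule out compensating occurrences outside $v$, but the underlying argument is the same.
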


Proposition~\ref{prop:table} (proof in Appendix~\ref{subsec:proof-proposition1}) states that if phrases $p_i$ and $p_j$ are both fields of a unique table node in $T'$ (e.g., \code{Date} and \code{Number} in police complaints), and their location vectors have the same length, they must be a perfect match. This is because records are generated using the same $T'$, ensuring that the relative distance between $p_i$ and $p_j$ remains constant across all records. 
Conversely, when their location vectors have different lengths, such as $L_{p_i} > L_{p_j}$, field $p_i$ (e.g., \code{Start Date} in invoices) may appear in multiple nodes, while $p_j$ (e.g., \code{Line}) belongs only to a node $v$. Here, $v_{p_j}$ and the subsequence of $p_i$'s location vector corresponding to its occurrences in blocks generated by the node $v$, denoted as $v_{p_i}'$, must be a perfect match, since $v_{p_i}'$ and $v_{p_j}$ share the same vector length. 
\techreport{A proof of Proposition~\ref{prop:table} is shown below.}
\techreport{
\begin{proof}[Proof of Proposition~\ref{prop:table}]
In the true template $T' = (V', E')$, when there exists a unique node  $v\in V', v.type = table$, $p_i,p_j \in v.fields$, if $L_{p_i} = L_{p_j}$, we have $\nexists v'\in V'$, $v \neq v'$ , s.t., $p_i \in v'.fields$ or $p_j\in v'.fields$, under the assumption that we make that each phrase $p$ has a unique label. Otherwise, $L_{p_i} \neq L_{p_j}$. Let the location vectors of $p_i$ and $p_j$ be $v_{p_i} = [i_1,i_2,...,i_m]$, $v_{p_j} = [j_1,j_2,...,j_m]$, respectively.   $\forall B_k$, where $v\rightarrow B_k$, let the index of $p_i$ and $p_j$ in $B_k$ be $i_k$ and $j_k$, respectively. $\forall k_1,k_2 \in [1,m]$, we have $i_{k_1} - j_{k_1} = i_{k_2} - j_{k_2}$, since $v\rightarrow B_{k_1}$ and $v\rightarrow B_{k_2}$, and the schema of table in the template is consistent across the records. This completes the first half of proposition. 

When  $L_{p_i} > L_{p_j}$, let $v_{p_i}^{'}$ be the subsequence of $v_{p_i}$ that occurs in blocks created from $v$, i.e., $\forall p_k \in v_{p_i}^{'}, p_k \in B$, where $v\rightarrow B$. Based on the above proof, $v_{p_i}^{'}$ is a perfect match with $v_{p_j}$, and thus $v_{p_i}$ is a partial perfect match with $v_{p_j}$. 
\end{proof}
}

Next we establish the corresponding property for fields in a key-value node. Consider a true field $p$ in a node $v$ whose type is $Key$-$Value$ in $T'$ (e.g., \code{DOB} in police complaints). 
Let $f(p) = True$ if the corresponding value for $p$ is always missing or present in every record generated by $T'$ (e.g., \code{Gender}). Otherwise, $f(p) = False$.



\begin{proposition}
\label{prop:kv}
    Consider the true template $T' = (V',E')$. Given a unique node  $v\in V', v.type = Key$-$Value$, $p_i,p_j \in v.fields$, and $f(p_i)=f(p_j)=True$: if $L_{p_i} = L_{p_j}$, then $p_i$ and $p_j$ are a perfect match; if $L_{p_i} > L_{p_j}$, then $p_j$ and $p_j$ are a partial perfect match. 
\end{proposition}

Proposition~\ref{prop:kv} (proof in Appendix~\ref{subsec:proof-proposition2}) states that if two phrases have a consistent filling pattern ($f(\cdot) = \textit{True}$) in a unique key-value node, they must be a perfect match. For instance, if \code{DOB}'s value is always missing while \code{Gender}'s value is always present in every record, then \code{DOB} and \code{Gender} are a perfect match, provided their location vectors have the same length. Conversely, if $f(p_i) = \textit{False}$, the relative distance between $p_i$ and $p_j$ is not consistent across records. However, even in this case, the phrase $p_i$ with $f(p_i) = \textit{False}$ ends up being a partial perfect match with phrase $p_j$ (where $f(p_j) = \textit{True}$) if there exists a sequence of two data blocks in which $p_i$ is either consistently "present" or consistently "missing," a scenario commonly observed in practice. 



\techreport{A proof is presented below.}
\techreport{
\begin{proof}[Proof of Proposition~\ref{prop:kv}]
Consider the location vectors of $p_i$ and $p_j$, $v_{p_i} = [i_1,i_2,...,i_m]$, $v_{p_j} = [j_1,j_2,...,j_m]$. 
    Under the assumption that a phrase $p$ has a unique label, when there exists a unique node  $v\in V', v.type = Key$-$Value$, $p_i,p_j \in v.fields$, and $f(p_i)=f(p_j)=True$, if $L_{p_i} = L_{p_j}$, $\forall B_k$, $v\rightarrow B_k$, let the index of $p_i$ and $p_j$ in $B_k$ be $i_k$ and $j_k$, respectively. 
    
    $\forall k_1,k_2 \in [1,m]$, $f(p_i)=f(p_j)=True$ implies that the values of $p_i$ and $p_j$ are consistently filled or missing across the records from the same key-value node in the template. Additionally, since the list of fields in two key-value blocks generated from the same key-value node are consistent across records, we have   $i_{k_1} - j_{k_1} = i_{k_2} - j_{k_2}$. Thus $p_i$ is a perfect match of $p_j$. When $L_{p_i} > L_{p_j}$, let $v_{p_i}^{'}$ be the subsequence of $v_{p_i}$ that occurs in blocks created from $v$, i.e., $\forall p_k \in v_{p_i}^{'}, p_k \in B$, where $v\rightarrow B$. Based on the above proof, $v_{p_i}^{'}$ is a perfect match with $v_{p_j}$, and thus $v_{p_i}$ is a partial perfect match with $v_{p_j}$. 
\end{proof}
}

\rmix{Notably, \sys does not rely on bounding boxes (i.e., the physical coordinates of phrases in Figure~\ref{fig:phrases}-2) to infer fields, which may be brittle due to OCR noise. Instead, \sys uses phrase locations based on OCR extraction order to more reliably capture consistent field patterns. 
In Figure~\ref{fig:police-example}, the location difference between \code{Date} and \code{Number} is consistently 1 every record, as \code{Number} is always next to \code{Date}, regardless of their bounding boxes. 
}

\vspace{-1mm}
 \subsection{Field \update{Inference} Algorithm}
 \label{subsec:field-inference-alg}
\update{Location-based matches are effective for \update{inferring} fields as shown by Proposition 1 and 2. However, when a field's value is constant across records (e.g., when the value of \code{Disposition} is always \code{SUSTAINED} in police complaints), this value may share similar location vector patterns with its corresponding field. To address this case, we incorporate LLMs to provide semantic knowledge to help distinguish whether a phrase is  a value or a field.}

We now outline our field \update{inference} algorithm in Algorithm~\ref{alg:field}, \update{which uses location-based matching as the primary method, while incorporating LLMs to enhance robustness and handle edge cases where location vector patterns alone may be misleading. }

\noindent\textbf{Step 1: Phrase Clustering}. \sys first merges any pair of phrases $p_i$ and $p_j$ into one cluster if $M(p_i,p_j) = 1$ (Line 2-9). 
\rtwo{After clustering, true fields that share consistent location vectors are grouped in the same cluster leading to large clusters, whereas value clusters tend to be smaller due to their inconsistent location patterns. }
Consider police complaints in Figure~\ref{fig:police-example}. This step ensures that the set of true fields appearing in the same table block, such as $B_1,B_3$, and $B_4$, are merged into one cluster, based on Proposition~\ref{prop:table}.



\noindent\textbf{Step 2: Cluster Pruning}. 
\rtwo{\sys prunes value clusters from the resulting clustering based on the observation that value clusters are typically smaller than field clusters. \sys incorporates cluster size into a confidence width metric explained shortly: larger clusters (likely fields) have lower width, indicating higher confidence, while smaller clusters (likely values) have higher width. To enhance robustness, \sys also incorporates semantic knowledge provided by LLMs. Each cluster is assigned a probability of being a field or value cluster based on the semantics of its phrases (e.g., \code{5/15/2023} is more likely a value than \code{Date}; easily judged by LLMs). } 
Let $|C|$ denote the size of cluster $C$ (i.e., number of phrases), $Pr(C)$ the probability that $C$ is a field cluster, and $w(C)$ the width of the confidence interval, respectively. Estimating $Pr(C)$ requires semantic knowledge provided by LLMs by using the prompt below, where the phrases in $C$ are used to fill the [phrases] placeholder. 
\mypython{
\small 
    LLM Prompt: Given a list of phrases labeled as either "key" or "value", identify and return the phrases that are more likely to be keys. [Phrases]}
$Pr(C)$ is estimated as the percentage of fields identified by LLMs over all phrases in $C$. $w(C)$ is computed conditioned on a confidence level of 95\% as 
$w(C) = 2\times 1.96\times \sqrt{\frac{Pr(C)(1-Pr(C))}{|C|}}$. 


\setlength{\textfloatsep}{0pt}
\begin{algorithm}[bt]
    \small
    \caption{\code{Field \update{Inference}}}
    \label{alg:field}
 \KwIn{$P$}
\textbf{/*Step 1: Phrase Clustering*/}\\
$ \mathcal{C} \leftarrow \emptyset$;  $P' \leftarrow set(P)$ \\ 
\For{$p_i \in P'$}{
$merge\_flag \leftarrow 0$ \\ 
\For{$C\in \mathcal{C}$}{
\If{$\exists p_j \in C$, s.t., $M(v_{p_i},v_{p_j})=1$}{
$C \leftarrow C\cup p_i$;
$merge\_flag \leftarrow 1$ \\ 
}
}
\If{$merge\_flag = 0$}{
$C \leftarrow \{p_i\}$, $\mathcal{C} \leftarrow \mathcal{C} \cup C$ \\
}
}
\textbf{/*Step 2: Cluster Pruning*/}\\
$G \leftarrow (V,E)$; $V \leftarrow \mathcal{C} \setminus SL$; $E \leftarrow \emptyset$ \\
\For{$(C_i,C_j), C_i,C_j \in V$}{
\If{$Pr(C_i) > Pr(C_j)$ and $w(C_i) < w(C_j)$}{
$E \leftarrow E \cup (C_i,C_j)$ \\ 
}
}
$\mathcal{C}_p \leftarrow Maximal(V)$\\
\textbf{/*Step 3: Cluster Recovery*/}\\
$F \leftarrow \mathcal{C}_p$ \\ 
\For{$C_i\in \mathcal{C}_p, C_j \in \mathcal{C}\setminus \mathcal{C}_p$}{
\If{$\exists p_i\in C_i, p_j\in C_j, L_{p_i} \leq L_{p_j}$, s.t., $PM(v_{p_i},v_{p_j})=1$}{$F \leftarrow F \cup C_j$}
}
\textbf{Return} $F$\\
\end{algorithm}

We first remove all singletons, denoted as $SL$, from the current set of clusters $\mathcal{C}$ (Line 11), as a phrase that does not match with any others is unlikely to be a field. We consider a new graph $G = (V,E)$ on clusters, where $V = \mathcal{C} \setminus SL$. 
We say cluster $C_i$ {\em dominates} $C_j$ if $Pr(C_i) > Pr(C_j)$ and $w(C_i) < w(C_j)$, implying that $C_i$ has a higher probability to be a field cluster with more confidence than $C_j$. For any pair of clusters $(C_i,C_j)$, we add an edge from $C_i$ to $C_j$ if $C_i$ dominates $C_j$ (Line 12-14). Let $\mathcal{C}_p$ be the maximal node set of $G$, where for any cluster $C\in \mathcal{C}_p$, there does not exist another cluster $C'\in V$ that  dominates $C$. After the graph is constructed by considering every pair of clusters in $V$, $\mathcal{C}_p$ is returned (Line 15).  


\noindent\textbf{Step 3: Cluster Recovery}. In Step 2, true field clusters may be pruned if a true field appears in more than one block in a record, resulting in imperfect matches with other true fields. For instance, for the invoice in Figure~\ref{fig:invoice}, the phrase \code{Start Date} appears in $B_2$, $B_3$, and $B_5$, and is not a perfect match but rather a \textit{partial perfect match} with the other unique true fields in $B_2$. Using  Proposition~\ref{prop:table}, we see that a sub-sequence of the location vector of \code{Start Date} (i.e., occurrences of \code{Start Date} in $B_2$ in every record) is a perfect match with the location vector of another  field (e.g., \code{Line}) in $B_2$. 
Here, we recover the clusters (including singletons) that are {\em partial perfect matches} with the identified clusters but weren't identified in previous steps  (Line 18-20).  
\vspace{-1mm}

\vspace{-1mm}
\section{Template \update{Inference}}
\label{sec:template-prediction}
We now describe how \sys infers the template $T$ given the concatenated document $D$ populated with records generated by $T$. 

\subsection{Row Labeling}
\label{subsec:row-label}

Given the set of inferred fields $F$, \sys next aims to infer the structure of the template by first labeling rows. 

\subsubsection{Row Label Probabilities and Alignment}
Consider the list of phrases $P$ extracted from  $D$, and let $R = [r_1,r_2,...]$ be the set of rows in $D$ where row $r_i$ represents a list of phrases in a row. 
Each row $r\in R$ is assigned with one of the four labels, $\{\code{K}, \code{V}, \code{KV}, \code{M}\}$\footnote{While keys and fields are analogous, we use different terminologies to indicate labeling of rows and phrases respectively for clarity.}, representing $Key$, $Value$, $Key$-$Value$, and $Metadata$, respectively. In Figure~\ref{fig:template_structure_infer}, $r_4$ is a $Key$ row, $r_5$ and $r_6$ are $Value$ rows, $r_7$ is a $Key$-$Value$ row, and $r_3$ is a $Metadata$ row. 


\begin{definition}
    [Row Label Probabilities]. Consider a row $r = [p_1,p_2,...,p_n]$. Let $Prob^{r}_{K}, Prob^{r}_{V}$, $Prob^{r}_{KV}$ and  $Prob^{r}_{M}$ be the probability of the row $r$ having the label \code{K}, \code{V}, \code{KV} and \code{M}, respectively. 
\end{definition}
\vspace{-0.5em}
\vspace{-1mm}
\begin{align}
    Prob^{r}_{K} &= \frac{1}{m}\sum_{1\leq i\leq n-1}I(p_i,p_{i+1},K) \\
    Prob^{r}_{V} &= \frac{1}{m}\sum_{1\leq i\leq n-1}I(p_i,p_{i+1},V)\\
    Prob^{r}_{KV} &= \frac{1}{m}\sum_{1\leq i\leq n-1}I(p_i,p_{i+1},KV) 
\end{align}
where $m= \sum_{1\leq i \leq n-1} I(p_i,p_{i+1},K) + I(p_i,p_{i+1},V) + I(p_i,p_{i+1},KV)$


\noindent  Here $I$ is an indicator function, with $I(p_i,p_{i+1},K) = 1$ if and only if $p_i \in F$ and $p_{i+1}\in F$, denoting that the pair of phrases $(p_i,p_{i+1})$ are both fields. Similarly, $I(p_i,p_{i+1},V) = 1$ and $I(p_i,p_{i+1},KV) = 1$ denoting that the phrase pair $(p_i,p_{i+1})$ are both values (i.e., $p_i \notin F$ and $p_{i+1} \notin F$) and a key-value pair (i.e., $p_i \in F$ and $p_{i+1} \notin F$). Additionally, we set $Prob^{r}_{M} = \epsilon$,  a small positive number. ($\epsilon = 0.0001$ in our implementation.) We will explain the rationale for this probability shortly.  We then  normalize the probabilities by dividing each probability  $Prob^{r}_{x}$, $x\in \{\code{K}, \code{V}, \code{KV}, \code{M}\}$ by  $(1+\epsilon)$. Including $\epsilon$ makes a negligible impact on the probability distribution across the labels $K$, $V$, and $KV$ as $\epsilon$ is very small.


\begin{figure}[tb]
    \centering
    \includegraphics[width=1\linewidth]{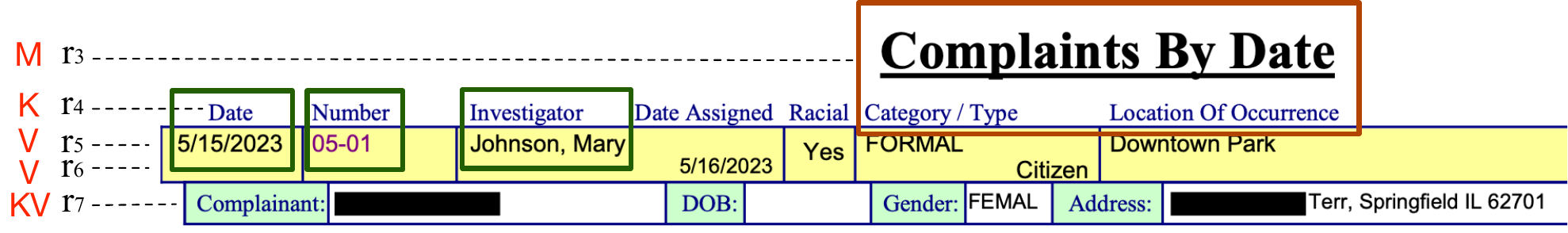}
    \vspace{-2em}
    \caption{\small Row Labels and Vertical Alignments in a Subportion of Police Records.}
    \label{fig:template_structure_infer}
\end{figure}

\begin{example}
    In Figure~\ref{fig:template_structure_infer}, consider row $r_4$ and assume all the phrases in $r_4$ are inferred as fields.  $Prob^{r_4}_{K}$ is computed by looking at every consecutive pair of phrases in $r_4$, such as (\code{Date, Number}), (\code{Number, Investigator}), (\code{Investigator, Date Assigned}). Since all phrases are inferred as fields,  $I(p_i,p_{i+1},K) = 1$ for all consecutive phrase pairs, and thus $Prob^{r_4}_{K}=\frac{1}{1+\epsilon}$, and $Prob^{r_4}_{M}=\frac{\epsilon}{1+\epsilon}$, with $Prob^{r_4}_{V}=Prob^{r_4}_{KV}=0$. Assume \code{Yes} and \code{SUSTAINED} in $r_5$ are false positives (i.e., they are values but are incorrectly inferred as fields). Then $Prob^{r_5}_{V} = \frac{3}{5\times(1+\epsilon)}$, while $Prob^{r_5}_{K} = \frac{1}{5\times(1+\epsilon)}$ and $Prob^{r_5}_{KV} = \frac{1}{5\times(1+\epsilon)}$. Note that the total number of pairs considered in the denominator in $r_5$ is different from $r_4$ since there are two missing values and two phrase pairs $(p_i,p_{i+1})$ where $p_i \notin F$ and $p_{i+1} \in F$, not considered in the label pool. 
\end{example}

Intuitively, in any label assignment, \update{{\em \bf each value row in a true table block should be vertically aligned with its header row}}. In Figure~\ref{fig:template_structure_infer}, rows $r_5$ and $r_6$ should be visually aligned with $r_4$. 
We now define the row alignment based on  bounding boxes of phrases in the row. Recall that we have defined the vertical alignment between two phrases $p_i$ and $p_j$, with bounding boxes $[x_{i1},y_{i1},x_{i2},y_{i2}]$ and $[x_{j1},y_{j1},x_{j2},y_{j2}]$ in Section~\ref{sec:background}, where $p_i$ and $p_j$ are vertically aligned, denoted as $p_i \Vdash p_j$, if $x_{j1} \leq x_{i2} \land x_{j2} \geq x_{i1}$. For example,  in Figure~\ref{fig:template_structure_infer}, \code{Date} $\Vdash$ \code{5/15/2023} since their bounding boxes overlap vertically, while \code{Date} $\nVdash$ \code{05-01}. \rmix{Note that vertical alignment based on bounding boxes is robust to minor OCR inaccuracies, as bounding boxes, if imprecise, are typically only slightly oversized and are still reliable to determine vertical alignment.} 



\begin{definition}[Row Alignment.]
\label{def:row-align}
      Two rows $r_i$ and $r_j$ are {\em well-aligned}, i.e., $A(r_i,r_j) = 1$, if $\nexists p_{j_k} \in r_j$, s.t., $p_{j_k} \Vdash p_{i_1}$ and $p_{j_k} \Vdash p_{i_2}$, $p_{i_1}, p_{i_2}\in r_i, i_1 \neq i_2$, and vice versa. Otherwise, $A(r_i,r_j) = 0$. 
\end{definition}

$A(r_i,r_j)=1$ simply implies that there does not exist a phrase in a row, say $r_i$, that overlaps with more than two phrases in the other row $r_j$.  For a Key row $r_i$ and its Value row $r_j$ in the same table, $r_i$ and $r_j$ are typically well-aligned, as no field is usually vertically aligned with more than one value. In Figure~\ref{fig:template_structure_infer}, $A(r_4,r_5)=1$ since there does not exist a phrase in $r_5$ that is aligned with two phrases in $r_4$, and vice versa. $A(r_4,r_7)=0$ since \code{Complaint} is vertically aligned with both \code{Date} and \code{Number}. 
Similarly, $A(r_3,r_4)=0$ since \code{Complaints By Date} is aligned with two phrases in $r_4$.

\subsubsection{The Row Label Assignment Problem}
We now formally state our problem of row label assignment. 
\begin{definition}[Row Labeling].
    \label{def:template-structure}
     Consider rows $R = [r_1,r_2,...]$, and inferred fields $F$. We introduce variables $y_i^{K}$, $y_i^{V}$, $y_i^{KV}$, and $y_i^{M}$ for row $r_i$, where  $y_i^{K}=1$ implies that $r_i$ has label \code{K}; otherwise, $y_i^{K}=0$. The problem of template structure inference is as follows: 
\end{definition}
\vspace{-1mm}
\begin{align}
  \text{max}\prod_{r_i\in R}  (y_i^{K}Prob_{K}^{r_i} + y_i^{V}Prob_{V}^{r_i} + y_i^{KV}&Prob_{KV}^{r_i} +  y_i^{M}Prob_{M}^{r_i})~\label{equ:goal} \\ 
  \hspace{-20mm} \text{s.t.} : 
    \forall r_i\in R, y_i^{K}, y_i^{V}, y_i^{KV}, y_i^{M} &\in \{0,1\} ~\label{equ:integer}\\
    \forall r_i\in R, y_i^{K} + y_i^{V} + y_i^{KV} + y_i^{M} &= 1 ~\label{equ:constraint} \\ 
    \forall i, y_i^{K} &\leq \sum_{j>i}A(r_i,r_j)y_j^{V} ~\label{equ:K existence}\\ 
    \forall i, y_i^V &\leq \sum_{j<i}A(r_j,r_i)y_j^K ~\label{equ:Vexistence} 
\end{align}

The row labeling problem in  Definition~\ref{def:template-structure} aims to find the {\em most probable row label assignments} by maximizing the products of the probabilities from all rows in $R$ as in (\ref{equ:goal}), 
under various constraints. Constraints~(\ref{equ:integer}) and (\ref{equ:constraint}) ensure that each row is assigned exactly one label. 
Constraint (\ref{equ:K existence})  states that for each \textit{Key} row $r_i$, there must exist a \textit{Value} row $r_j$ under $r_i$ ($j>i$) aligned with it ($A(r_i,r_j)=1$). Similarly, Constraint (\ref{equ:Vexistence}) says that for each \textit{Value} row $r_j$, we expect a \textit{Key} row $r_i$ before $r_j$ ($i<j$) with $A(r_i,r_j)=1$, i.e., the values are aligned with keys.   





If the \textit{Metadata} label is not introduced, a row $r_i$ could be assigned a label with $0$ probability if it violates Constraints (\ref{equ:K existence}) and (\ref{equ:Vexistence}). For example, if $r_i$ has $Prob_K^{r_i} = 1$ but lacks a corresponding $Value$ row beneath it, it would be assigned a label with 0 probability, leading to a poor assignment. To prevent this, we introduce the \textit{Metadata} label with a low $\epsilon$ probability, ensuring the worst possible label for a row is \textit{Metadata}. This adjustment has minimal impact on other rows, as a \textit{Metadata} row does not affect the template structure (e.g., inferring table or key-value blocks), nor does it interfere with the assignment of labels with non-zero probability. Empirical observations confirm the effectiveness of the \textit{Metadata} label. True \textit{Metadata} rows are often misclassified as \textit{Key} rows (e.g., $r_3$ in police complaints) or \textit{Value} rows (e.g., $r_{9}$ in invoices). Such misclassified rows are typically {\em misaligned} with true \textit{Key} rows, allowing them to be easily identified.

Since the objective in (\ref{equ:goal}) is non-linear, we convert it to be linear below for ease of optimization. 

\begin{theorem}
\label{theo:problem}
    The following is equivalent to Eq.~(\ref{equ:goal}), \\ 
    \vspace{-2mm}
    \begin{align}\label{equ:obj}
        \max \sum_{r_i \in R} (& y_i^{K}\text{log}(Prob_{K}^{r_i}) + y_i^{V}\text{log}(Prob_{V}^{r_i}) + y_i^{KV}\text{log}(Prob_{KV}^{r_i}) + y_i^{M}\text{log}(Prob_{M}^{r_i})) 
    \end{align}
\end{theorem}

We show the proof of Theorem~\ref{theo:problem} in Appendix~\ref{subsec:proof-theorem1}.

\subsubsection{Solving Row Label Assignment: ILP and Hardness}
After linearization, the problem defined in Definition~\ref{def:template-structure} with the new objective in (\ref{equ:obj}) is an Integer Linear Programming (ILP) problem.  \techreport{The function $A(\cdot)$  can be precomputed for all row pairs ($r_i,r_j$) before solving the ILP problem, which makes all the constraints fixed and linear.} We additionally can show the following: 


\begin{theorem}
\label{theo:np-hard}
    {\sc Row labeling} is {\sc NP-hard}. 
\end{theorem}

\papertext{
\noindent We prove Theorem~\ref{theo:np-hard} via a reduction from Vertex Cover in Appendix~\ref{subsec:proof-np-hard}. }
\techreport{We prove Theorem~\ref{theo:np-hard} via a reduction from Minimum Vertex Cover  below.}
\techreport{\begin{proof}[Proof of Theorem~\ref{theo:np-hard}]~\label{proof:np-hard}
We reduce the \textsc{Minimum Vertex Cover} problem to an instance of the row labeling problem. Consider an instance of vertex cover problem, where we are given an undirected graph \(G=(V,E)\), with
$|V| = n, |E|=m,$ and a parameter $k$. The decision version of \textsc{Minimum Vertex Cover} asks: Is there a subset $S\subseteq V,\ |S|\le k,$ such that every edge has at least one endpoint in $S$?  

We now consider an instance of the row labeling problem. We create:
$
\underbrace{v_1,\dots,v_n}_{\text{vertex rows}}
\quad\cup\quad
\underbrace{\,e_1,\dots,e_m\,}_{\text{edge rows}}
$
where each \(v_i\) corresponds to a vertex in \(V\) and each \(e_k\) corresponds to an edge in \(E\). We index these rows so that
$
\mathrm{Index}(v_i) = i \quad (i=1,\dots,n),
\quad
\mathrm{Index}(e_k) = n + k \quad (k=1,\dots,m).
$
Hence any vertex row \(v_i\) has index \(i\le n\), 
and any edge row \(e_k\) has index \(n+k > n\). 

We also introduce ``columns'' of phrases per row to ensure the right alignment across the rows. We have a distinct ``column''  for each \emph{edge row}. For example, if $(a,b)$, $(a,c)$, $(b,c)$, and $(b,d)$ are edges, we introduce $c_{ab}$, $c_{ac}$, $c_{bc}$, and $c_{bd}$ in some order. Such an order of columns must be consistent across the rows but can be arbitrary. That column aligns with exactly those vertex rows that are the endpoints of that edge (e.g.\ the column for edge \(ab\) aligns only with rows \(a\) and~\(b\)). In matrix terms, this means:
$A(a,\,ab)=1, A(b,\,ab)=1,\text{and }A(x,\,ab)=0 \text{ for any other row }x.$ 

To ensure this constraint, we introduce text per column. Suppose the columns are indexed $1,2,3,\dots,m$. Text phrases for column $i$ begin at $\alpha_i$ and extend until $\alpha_i + \beta$, where $\alpha$ and $\beta$ are integers greater than 2, and $\alpha > \beta$. The gap between $\alpha_i + \beta$ to $\alpha(i+1)$ ensures a correct phrase extraction. We then add phrases between $\alpha_i$ and  $\alpha_i + \beta$ as follows. For a column corresponding to $e_i = (v_x,v_y)$, that we call $c_{v_x,v_y}$. 
\begin{itemize}
    \item For edge row $e_i$, we add a phrase from $[\alpha_i + \beta-1, \alpha_i + \beta)$. 
    \item For vertex $v_x$ and $v_y$, we add a phrase from $[\alpha_i, \alpha_i + \beta)$. 
    \item For all other rows, we add a phrase from $[\alpha_i, \alpha_i + \beta-2]$. 
\end{itemize}

Thus, an edge row $e_i = (v_x,v_y)$ is only aligned with its vertex rows $v_x$ and $v_y$, and not any other edge or rows. Aggregating information across all rows, we have $A(e,v)=1$ iff $v$ is one of the endpoints of $e$. $A(e_i,e_j) = 0, \forall i \neq j$. $A(v_i,v_j)=1, \forall i \neq j$. 

In the following, we will set probabilities such that all edge rows are set as label \code{V}, a subset of vertex rows corresponding to a vertex cover will be set as label \code{K}, and the remaining vertex rows are set as label \code{KV}. In the instance of row labeling problem, each row $r_i$ has four binary variables:
$y_i^K,\; y_i^V,\; y_i^{KV},y_i^M$ with $y_i^K + y_i^V + y_i^{KV} + y_i^M = 1$.  Constraint~\ref{equ:K existence},  $y_i^{K} \leq \sum_{j>i}A(r_i,r_j)y_j^{V}$, enforces that if a row is labeled \code{K}, it must find a neighbor of higher index labeled \code{V}.  
In our construction, any vertex row $v_i$ (index $i\le n$) that is \code{K} must see some edge row $e_k$ (index $n+k>i$) labeled $V$ for which $A(v_i,e_k)=1$ (meaning $v_i$ is an endpoint of $e_k$). Similarly, Constraint~\ref{equ:Vexistence},  $y_i^V \leq \sum_{j<i}A(r_j,r_i)y_j^K$, enforces that if a row is labeled \code{V}, it must find a neighbor of lower index labeled \code{K}. 
In our construction, any edge row $e_k$ (index $n+k$) that is \code{V} must see some vertex row $v_i$ (index $i<n+k$) labeled \code{K} with $A(v_i,e_k)=1$. 
So each edge is covered by a \code{K} from one of its endpoints. 

We now assign \emph{probabilities} for each row+label combination, $Prob^{r_i}_{\ell}$, where $\ell\in\{K,V,KV,M\}$.  
We will transform the objective in Eq.~(\ref{equ:obj})  into a \emph{decision} version by imposing a threshold $\Theta$ and asking if $\sum_{r_i \in R} (y_i^{K}\text{log}(Prob_{K}^{r_i}) + y_i^{V}\text{log}(Prob_{V}^{r_i}) + y_i^{KV}\text{log}(Prob_{KV}^{r_i}) +  y_i^{M}\text{log}(Prob_{M}^{r_i})) \geq \Theta$. 
We set $Prob^{r_i}_{M} = 0$ for any row $r_i$, hence no solution will assign $M$ to any row. We can change the proof to admit $M$ at a small probability, but we omit it for simplicity. For each edge row $e_k$, we set $Prob^{e_k}_{V}=1$, $Prob^{e_k}_{K}=0$, and    $Prob^{e_k}_{KV}=0$. Thus each edge row is effectively forced to be \code{V}. For each vertex row $v_i$, we set $Prob^{v_i}_{K} = \lambda$, $Prob^{v_i}_{KV} = \mu$, and $Prob^{v_i}_{V} = 0$, where $\mu > \lambda$. Hence, labeling $v_i$ as $K$ consitrubtes $log(\lambda)$ to the sum, while labeling $v_i$ as $KV$ consitrubtes $log(\mu) > log(\lambda)$.  Thus, if we label $x$ vertices out of $n$ as $K$, we add $log(\lambda)x + log(\mu)(n-x)$. Let $k$ be the parameter in the decision version of the vertex cover problem. Let $\Theta = log(\lambda)x + log(\mu)(n-x)$. And therefore the decision version of our row labeling problem becomes  $\sum_{r_i \in R} \sum_{l\in\{K,V,KV,M\}}y_i^{l}\text{log}(Prob_{l}^{r_i}) \geq \Theta$. 

We now show the correctness of the above reduction. On one hand,  ($\Rightarrow$) If $G$ has a vertex cover $S$, $|S|\le k$. Label the vertex rows in $S$ as $K$, the other vertex rows as $KV$, and all edge rows as $V$.  
\begin{itemize}
\item \emph{All alignment constraints are satisfied:}
  - Each edge row labeled $V$ sees a $K$ neighbor above from among its endpoints in $S$.
  - Each $K$ vertex row sees at least one incident edge row $V$ below; otherwise it will not be part of the cover. 
\item \emph{Total log-prob $\ge\Theta$:}
  - We used $\le k$ vertices as $K$, so \\$\sum_{r_i \in R} \sum_{l\in\{K,V,KV,M\}}y_i^{l}\text{log}(Prob_{l}^{r_i}) \geq \Theta$. 
\end{itemize}

On the other hand, ($\Leftarrow$) if the row labeling problem meets alignment requirement and the objective is greater than $\Theta$, 
\begin{itemize}
    \item No row can be labeled $M$.
    \item Every edge row is $V$. Labeling it otherwise means we can improve the objective by changing it to be so. 
    \item At most $k$ vertex rows can be labeled $K$. If we had more than $k$ vertex rows labeled as $K$, we will have an objective less than $\Theta$. 
    \item Because each edge row $e_k$ is $V$, it must see a $K$ neighbor above.  So the set of $K$-labeled vertices forms a valid vertex cover of size $\le k$. 
\end{itemize}
Hence there is a size-$\le k$ vertex cover in $G$ if and only if there is a labeling with alignment constraints satisfied and total log-prob $\ge\Theta$.  Since \textsc{Minimum Vertex Cover} is NP-complete, 
deciding whether 
\[
\max \;\sum_{r}\sum_{\ell\in\{K,V,KV,M\}} y_r^\ell\log(\mathrm{Prob}^{r}_{\ell}) \;\ge\;\Theta
\]
is NP-hard. Thus the row labeling problem is also NP-hard.
\end{proof}  
Under the well-known assumption that $NP\neq P$, there does not exist a polynomial-time solution that can solve the problem optimally.

}
Unfortunately, solving the above ILP  can be expensive.

\topic{Pruning Strategy} Thankfully, inferring the template does not require all of $D$. Instead, we can use a consecutive subsequence, $r_i,r_{i+1},...,r_{i+j}$, of rows $R$ containing {\em at least one record}  from the true template. Let $R' \subseteq R$ be the smallest consecutive subsequence of rows in $R$ such that each inferred field $p \in F$ appears at least twice in $R'$. $R'$ is obtained by scanning rows of $R$ in ascending order and stopping once each inferred field appears twice. We then have: 

\begin{theorem}
\label{theo:input}
    Let $F'$ be the true fields in the true template $T'$, and $F$ the inferred fields. If $\exists p\in F \cap F'$, and $\forall Rec$, $p$ appears exactly once in $Rec$, then $R'$ contains at least one record $Rec$ created by $T'$. 
\end{theorem}

\noindent Theorem~\ref{theo:input} guarantees that as long as there is a field $p$ that is correctly inferred and  appears exactly once in every record, then $R'$ contains at least one complete record. 
\papertext{See proof in Appendix~\ref{subsec:proof-theorem3}}.
\techreport{
\begin{proof}[Proof of Theorem~\ref{theo:input}]~\label{proof:input}
If there exist a correctly inferred field $p\in F$ that appears exactly once in every record, and $R'$ contains every field at least twice, then $R'$ contains at least one complete record given that $R'$ is a consecutive sublist of rows. 
\end{proof}
}

\sys considers $R'$ as input to the row labeling problem, and further uses the output row labels to infer the template in Section~\ref{subsec:template_infer}.  
Empirically, thanks to the small size of input rows $|R'|$ (roughly 32 rows on average in our benchmark), an ILP solver~\cite{Gurobi} provides a solution in milliseconds. \techreport{When $|R'|$ is large, we can force the ILP solver to return the best feasible solution within a given time limit. Note that in the implementation of the above ILP, we additionally add the same small $\epsilon = 0.0001$ to the probabilities of all labels to avoid the invalid expression $\log(0)$ in the objective function in (\ref{equ:obj}), and label probabilities are normalized accordingly. This adjustment again does not affect the probability distributions among the different labels for a row as $\epsilon$ is small. }



\subsection{Template Inference}
\label{subsec:template_infer}

Given $R' = [r_1, r_2, \dots, r_n]$ with row label inferences, we aim to determine the structure of the underlying template. Let $B.phrases$ be the set of phrases in a data block $B$. 

We begin by initializing an empty template tree $T = (V, E)$ with an artificial root. 
1) For every row $r_i$ labeled \code{K}, we create a tree node $v_i$ with $v_i.type = table$ and $v_i.fields = r_i$. 
2) We merge together as many consecutive rows labeled \code{KV} as possible into a block $B$, and create the corresponding node $v_j$ where $v_j.type =$ {\em Key-Value} and $v_j.fields = F \cap B.phrases$.   
3) For each row $r_i$ labeled as \code{V}, we assign it to its \textit{closest aligned key row} preceding $r_i$, denoted as $r_j$, where $j < i$, $r_j.label = \code{K}$, $A(r_j, r_i) = 1$, and $\nexists r_k, j < k < i$, such that $r_k.label = \code{K}$ and $A(r_k, r_i) = 1$. 
Note that if a newly created node $v_i$ has the same $v_i.type$ and $v_i.fields$ as an existing node $v_j$ in $T$, $v_i$ will not be inserted into $T$. 
Below, we use two examples to illustrate node creation and tree assembly.  

\begin{figure}[tb]
    \centering
    \includegraphics[width=0.7\linewidth]{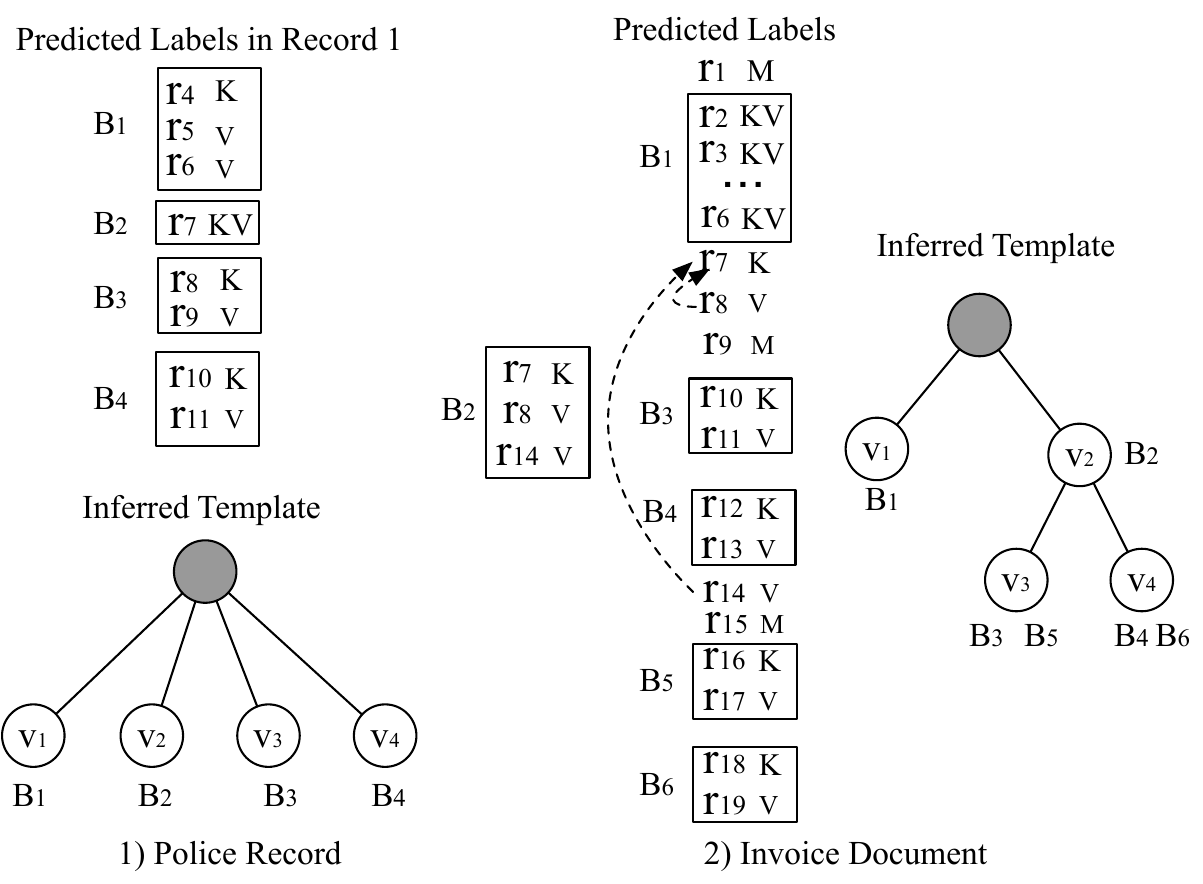}
    \vspace{-1em}
    \caption{\small Template Structure Inference. }
    \label{fig:template_infer} 
\end{figure}

\begin{example}
In Figure~\ref{fig:template_infer}, inferred row labels are shown next to the rows. Rows within the same table, such as $r_1$ and $r_2$ in Figure~\ref{fig:template_infer}-1, are aligned and presented together in a block. Rows grouped in the same table block are well-aligned according to  Definition~\ref{def:row-align}. 
Consider police complaints in Figure~\ref{fig:template_infer}-1. Rows labeled \code{K}, specifically $r_4$, $r_8$, and $r_{10}$, trigger creation of nodes $v_1$, $v_3$, and $v_4$, respectively, while $r_7$, labeled \code{KV}, corresponds to $v_2$. The blocks are sequentially placed with no overlap, with all nodes as leaves. The pre-order traversal of the corresponding nodes in the tree is sorted by block indices (e.g., $loc(B_1) < loc(B_2)$), resulting in the template $T$ depicted in Figure~\ref{fig:template_infer}-1. 
In the invoice in Figure~\ref{fig:template_infer}-2, consecutive \code{KV} rows ($r_2$ to $r_6$) are merged into a single \code{KV} block $B_1$. In table block $B_2$, $r_7$ is the closest key row aligned with $r_{14}$, making $r_{14}$ a value row in $B_2$. $B_3$ and $B_5$ correspond to the same node $v_3$ since $r_{17} = v_3.fields$. Therefore, $v_3$ is created only once, triggered by $r_{10}$. 
Since data block $B_2$ (corresponding to $v_2$) overlaps with $B_3$ and $B_4$ (corresponding to $v_3$ and $v_4$), $v_2$ becomes the parent of $v_3$ and $v_4$, while $v_1$ is a leaf node since its block doesn't overlap with others.
\end{example}


\techreport{Note that sometimes template structure inference can improve field inference. For example, assume \code{Sustained} in $r_5$ in Figure~\ref{fig:police-example} is incorrectly inferred as a field, and thus a false positive.  If the row $r_5$ is identified as a \textit{Value} row during template inference, the field \code{Sustained} is corrected to be a value.} The complexity of template inference based on inferred row labels is $O(|R'|^2)$, which is efficient since $|R'|$ is typically small.

\section{Data Extraction}
\label{sec:data-extraction}


Given the inferred template $T$ and the concatenated document $D$, \sys aims to extract data from the set of records $\mathcal{R}ec = [Rec_1, Rec_2, \dots]$ generated by $T$ within $D$. To do so, we first separate the given document $D$ to into records based on  $T$ (Section~\ref{subsec:record_separation}). Within each record, we further identify the data blocks (Section~\ref{subsec:block-seperation}), and then extract data from table and key-value blocks (Section~\ref{subsec:data_extraction}). 

\subsection{Record Separation}
\label{subsec:record_separation}




Consider the row representation $R$ of $D$, $R = [r_1, r_2, \dots]$. Recall that a block $B$ is a sequence of rows.   
For any node $v \in V$ in the template $T$, we say that node $v$ is \textit{visited} by block $B$ if all the fields of $v$ appear in block $B$, i.e., $v.fields \subseteq B.phrases$, denoted by $vis(v, B) = \text{True}$. Intuitively, a record is the smallest block that visits \textit{every} node in $T$ \textit{at least once} via a \textit{pre-order} traversal. The document $D$ is now separated into a list of records $\mathcal{R}ec$.


\begin{figure}[tb]
    \centering
    \includegraphics[width=0.3\linewidth]{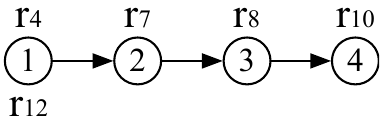}
    \vspace{-1em}
    \caption{\small Record Separation in Police Complaints.}
    \label{fig:record}
\end{figure}

\begin{example} 
Given the inferred template $T$  in Figure~\ref{fig:template_infer}-1, we aim to separate records 1 and 2 in Figure~\ref{fig:police-example}.  We scan each row $r_i\in D$ in ascending order of location, and consider the pre-order traversal of nodes in $T$ in Figure~\ref{fig:record}. $vis(v_1,[r_4])$ is True as $v_1.fields = r_4$ ($r_4$ is the header of the table block corresponding to $v_1$), and $v_1$ is thus visited. Similarly, $[r_7]$ visits $v_2$ as $v_2.fields \subseteq r_7$, and $[r_{10}]$ visits $v_4$. When $r_{12}$ visits $v_1$ the second time ($v_1.fields = r_{12}$), all nodes in $T$ have been visited at least once, and $v_1$ being visited by $r_{12}$ the second time implies the start of a new record. Note that it is not necessary that each node is visited exactly once, because nested nodes (e.g., $v_3$ in invoice template) may be visited multiple times. 
\end{example}



\subsection{Block Separation}
\label{subsec:block-seperation}
Given a record $Rec\in \mathcal{R}ec$ from the previous step, we aim to identify a list of blocks corresponding to each node in $T$. Figures~\ref{fig:police-example} and~\ref{fig:invoice} show the data blocks within a record in  police complaints and  invoices, respectively.

Block separation proceeds in two steps as in Algorithm~\ref{alg:block}. First, we assign the labels for each row $r\in Rec$ based on the template $T$. Given a row $r = [p_i,\dots,p_j]$, if $\exists v\in V, v.type=$\textit{Table}, $\forall p\in v.fields, p\in r$, then $r.label = K$ (Lines 2-4).  Otherwise, if $\exists v\in V, v.type = ${\em Key-Value}, $r \cap v.field \neq \emptyset$, $r.label=$ $KV$ (Lines 5-6). In all the other cases, $r.label = V$ (Lines 7-8). 
Second, for a value row $r_i$ whose label is $V$, let $r_j$ be the closest key row preceding $r_i$. If $r_i$ is aligned with $r_j$, i.e., $A(r_j,r_i)=1$, then $r_i$ is assigned to be a value row of the table with the key row as $r_j$ (Lines 11-14). Otherwise, if there exists another key row $r_l$ preceding and as close as possible to $r_i$, where $A(r_l,r_i)=1$ and the corresponding node $v$ is not a leaf, then $r_i$ is assigned to be a value row of $r_l$ (Lines 15-18). 
In  other cases, $r_i$ is labeled as \textit{Metadata} (Lines 19-20), as it is neither aligned with any $K$ row (and thus not a valid $V$ row) nor inferred to be a $KV$ or $K$ row. Finally, for any consecutive key-value rows, we merge them together into one block (Lines 21-23). 

\begin{example}
Consider the police complaints in Figure~\ref{fig:police-example} and its inferred template in Figure~\ref{fig:template_infer}-1. The labels for $r_4$, $r_8$, and $r_{10}$ are \code{K}, $r_7$'s label is \code{KV}, and all other rows are labeled \code{V}. $r_5$ is assigned to $r_4$ because $r_4$ is aligned with $r_5$ and is the closest preceding key row. Similarly, $r_9$ is assigned to $r_8$. 
Now consider the invoice document in Figure~\ref{fig:invoice} and its inferred template in Figure~\ref{fig:template_infer}-2. The closest preceding key row to $r_{14}$ is $r_{12}$. However, $r_{12}$ is not the key row for $r_{14}$ because they are not aligned. The closest preceding key row aligned with $r_{14}$ is $r_7$, corresponding to node $v_2$ in the template. Since $v_2$ is not a leaf node, its data block can overlap with others as defined in Section~\ref{sec:template}. Thus, $r_{14}$ is correctly assigned to $r_7$.
\end{example}


Each record $Rec \in \mathcal{R}ec$  is now transformed into a list of data blocks, either table or key-value blocks. 



\setlength{\textfloatsep}{0pt}
\begin{algorithm}[bt]
    \small
    \caption{\code{Block\_Separation}($Rec,T=(V,E)$)}
    \label{alg:block}
/*\textbf{Step 1: Row Label Assignment}*/\\
\For{$r \in Rec$}{
\If{$\exists v\in V, v.type=$table$, v.fields = r$}{
$r.label$ = K\\
}
\uElseIf{$\exists v\in V, v.type=$key-value$, r \cap v.field \neq \emptyset$} {
$r.label$ = KV\\
}
\Else{
$r.label=$V\\
}
}
/*\textbf{Step 2: Block Separation}*/\\
$Member \leftarrow \emptyset$ \\ 
\For{$r_i\in Rec, r_i.label=$V}{
$r_j \leftarrow $ \code{closest\_preceding\_key\_row}($r_i$) \\ 
\eIf{$A(r_i,r_j)=1$}{
    $Member(r_j).append(r_i)$ \\  
}{
$r_l \leftarrow$  \code{closest\_preceding\_aligned\_key\_row}($r_i$) \\ 
\eIf{$\exists r_l$, $v_l.hasChild=True$}{
$Member(r_l).append(r_i)$ \\ 
}{
$r_i.label$ = Metadata
}
}
}
\For{$r_i,r_{i+1}\in Rec$}{
\If{$r_i.label=$KV and $r_{i+1}.label=$KV}{
$Member(r_i).append(r_{i+1})$\\
}
}
\textbf{Return} $Member$ \\ 
\end{algorithm}





\subsection{Data Extraction}
\label{subsec:data_extraction}

We then extract data from the table and key-value blocks.

\noindent\textbf{Data Extraction in Table Blocks}. Given a table block $B$, we next extract its contents. Let $B.fields=[f_1, f_2, \dots, f_n]$ be the list of inferred fields in $B$ sorted in ascending order of location. 

The first row $r_l$ in $B$ corresponds to $B.fields$, and each value row $r_i \in B, r_i \neq r_l$ corresponds to a tuple of this table, where $r_i = [p_{i1}, p_{i2}, \dots, p_{im}]$. Intuitively, values corresponding to a field are typically aligned vertically as a column in a table block. Thus, given a value row $r_i\in B$, if $p_{ij}$ is vertically aligned with $f_j$, i.e., $p_{ij} \vDash f_j$, then $p_{ij}$ is determined to be a value of $f_j$. Given a field $f_j$, if there does not exist a value $p_{ij} \in r$, such that $p_{ij} \vDash f_j$, then $f_j$'s value is {\em missing} (or {\em NULL}) for row $r_i$. Consider data extraction of block $B_1$ in Record 1 for police complaints in Figure~\ref{fig:police-example}. The value for \code{Completed} is missing in $r_5$, \code{05-01} and \code{Yes} are the values for \code{Number} and \code{Recorded On Camera} in $r_5$, respectively.

\setlength{\textfloatsep}{0pt}
\begin{algorithm}[bt]
    \small
    \caption{\code{KV-Extract}($B$)}
    \label{alg:kv}
$seen \leftarrow \{\}$; $i\leftarrow 0$; $KV \leftarrow \emptyset$ \\ 
\For{$p_i \in B$}{$seen[i] \leftarrow \code{False}$}
\While{$i<|B|-1$}{
    \If{$seen[i] == $ \code{False}}{
        \If{$p_i \in B.fields \land p_{i+1} \notin B.fields$}{
            $KV \leftarrow KV \cup (p_i,p_{i+1})$ \\ 
            $seen[i+1] \leftarrow \code{True}$
        }
        \If{$p_i \in B.fields \land p_{i+1} \in B.fields$}{
            $KV \leftarrow KV \cup (p_i,missing)$ \\ 
        }
    }
    $i \leftarrow i+1$\\ 
}
\textbf{Return} $KV$
\end{algorithm}

\noindent\textbf{Data Extraction in Key-Value Blocks}. In a key-value block $B$, a field either has a corresponding value or may be missing. In Algorithm~\ref{alg:kv}, to extract key-value pairs given inferred fields, we sequentially scan each consecutive phrase pair $(p_i, p_{i+1})$ in $B$. We use an array $seen[i]$ to track whether the phrase $p_i$ has been scanned. If $(p_i, p_{i+1})$ forms a key-value pair, i.e., $p_i \in B.fields \land p_{i+1} \notin B.fields$, it is added to the extraction result (Lines 6-7).  We then examine the next phrase pair starting from $p_{i+2}$ by setting $seen[i+1]$ to \texttt{True}. If both phrases are fields, the phrase $p_i$ is assigned a {\em missing} value, and we proceed by examining the phrase pair starting with $p_{i+1}$ (Lines 9-10). For example, if \code{Complaint} and \code{DOB} in row $r_7$ in Figure~\ref{fig:police-example} are identified as fields, their values are inferred as \code{missing}.  

\update{Finally, we assemble the extracted data from each block in every record into a tree, where we describe the representation of the data extraction results in Appendix~\ref{susbec:data-extraction-objects}. }




\topic{\update{Metadata Preservation}} 
In addition to extracting structured data, \sys stores metadata and preserves its associations with the extracted data.  To achieve this, \sys stores the bounding box and page number for every metadata phrase. 
\rone{In police complaints in Figure~\ref{fig:police-example}, the phrase \code{Complaint \#1} in row $r_{9}$ is extracted as metadata since it is neither a table header nor a value (i.e., it lacks an associated column header). Here, \sys extracts the table with schema $\{$\code{Type of Complaint, Description, Complaint Disposition}$\}$ while preserving spatial relationships between metadata and table content using bounding boxes. For example, the preserved bounding boxes suggest that \code{Complaint \#1} is horizontally aligned with the first row of the extracted table. Users can leverage this alignment to expand the extracted table  by adding a new column called \code{Complaint} and populating the first row with  value \code{1} for it. Since the interpretation and transformation of metadata depends on use cases, \sys leaves this task to end users. However, \sys preserves the spatial relationships between metadata and extracted data to enable such transformations. }  


\topic{\update{Correctness of End-to-end Approach}} 
\papertext{\update{Finally, we establish the correctness of the end-to-end \sys pipeline in one commonly observed type of documents, that we call {\em compliant documents}. We leave the definition of compliant documents and the correctness of \sys in Appendix~\ref{subsec:appendix-correctness}.}}

\techreport{\subsection{Analysis}
\label{subsec:analysis}}

\techreport{Now we analyze the correctness of the returned data extraction objects in one commonly observed type of document that we call  {\em compliant documents}, defined below. }

\techreport{
\begin{definition}
    [Compliant Document]. Consider a table block $B$. If every value phrase is uniquely vertically aligned with its field phrase, then $B$ is a \textit{compliant table block}. Consider a key-value block $B = [p_1,p_2, ..., p_m]$, where $p_i$ is a phrase.  If $\forall p_i\in B$, $p_i\notin B.fields$, we have $p_{i-1} \in B.fields$, then $B$ is a {\em compliant key-value block}. Given a document $D$, if $\forall B \in D$, $B$ is compliant, then $D$ is a {\em compliant document}. 
\end{definition}}

\techreport{
In a compliant table block, every value row is vertically aligned with the header row in the table, while in a compliant key-value block $B$,  every value $p\notin B.fields$ has a preceding field. }

\techreport{
\begin{theorem}
\label{theo:extraction}
For a compliant document $D$ whose Metadata rows are not well-aligned with any Key rows, the data extraction objects for $D$ are correct as long as the phrases $P$ in $D$ are extracted correctly and the inference of $T$ is correct. 
\end{theorem}}

\techreport{
Many real-world templatized documents in our benchmark are complex due to intricate template structures. However, most individual data blocks—whether table or key-value blocks—are compliant. Across 34 real-world datasets with over 1,000 documents spanning diverse domains, 91\% of the documents are compliant. For the non-compliant cases, removing just 3\% of non-compliant phrases results in compliant documents. A non-compliant phrase refers to a table cell misaligned with its field or a value in a key-value block without a corresponding field. }
\techreport{Thus, Theorem~\ref{theo:extraction} provides correctness guarantees for our approach across extensive real-world datasets, with a detailed proof below. At the end of Section~\ref{sec:exp}, we present an empirical analysis of scenarios where \sys fails. 
\begin{proof}[Proof of Theorem~\ref{theo:extraction}]~\label{proof:extraction}
If the inference of $T$ is correct, then the set of fields is accurately inferred for each data block. Any metadata row $r$ not aligned with a key row will be correctly identified as metadata in Algorithm~\ref{alg:block}. Furthermore, data block separation is correct if the inferred template $T$ matches the true template, as all records are generated using the same $T$. 
In a compliant table block, \sys correctly extracts all field-value mappings in the key row and its value rows since each value phrase is uniquely vertically aligned with its corresponding field phrase. Similarly, \sys accurately extracts key-value pairs in a compliant key-value block because every true value has a preceding key. This concludes the proof. 
\end{proof}
}

\vspace{-1mm}
\section{Experimental Evaluation}
\label{sec:exp}

\vspace{-1mm}
We evaluate the performance of \sys on \update{two benchmarks comprising 64 real-world templatized document collections}. 

\vspace{-1mm}
\subsection{Evaluation Setup}
\label{subsec:exp-setup}
\noindent\textbf{Datasets}. 
\update{We constructed two benchmarks for our evaluation. The first benchmark, {\em Q-Benchmark}, focuses on evaluating the quality of extraction results. The second benchmark, {\em S-Benchmark}, examines the scalability of the compared tools.
 } 
\update{Q-Benchmark consists of }
 34 real-world datasets from our collaborators,
and three open benchmarks~\cite{naparstek2024kvp10k,xu2022xfund,vsimsa2023docile}, spanning diverse domains such as police use of force documents, invoices, grant reports, order bills, certification records, contracts, and trade forms. 
We randomly sampled 5 to 30 documents per dataset (\update{around 7.4 documents and 7189 tokens on average}), based on dataset size, with more samples for smaller datasets and fewer for larger ones to ensure representative coverage. The ground truth for all datasets was manually collected and verified by three human labelers over a month, highlighting the task's complexity even for humans. \update{The S-Benchmark consists of 30 datasets collected from the open benchmark~\cite{naparstek2024kvp10k}, containing around 2,133 documents and over two million tokens per dataset on average. Given its scale, we do not have ground truth for S-Benchmark, so we primarily focus on evaluating latency and cost.  }

We classify datasets per benchmark into three types based on complexity: \code{Easy}, \code{Medium}, and \code{Hard}. \code{Easy} datasets have templates with only one node besides the virtual root, meaning each document contains either a pure table or key-value block. For simplicity, we omit the virtual root when discussing nodes in templates.  
\code{Medium} datasets feature templates with more than two nodes, all in the same layer (i.e., all nodes are leaves), indicating sequentially placed, non-overlapping data blocks within a record.  
\code{Hard} datasets have nodes with children, implying overlapping data blocks, as in invoices in Figure~\ref{fig:invoice}.  
Table~\ref{tab:datasets} summarizes the characteristics of the datasets. 





\noindent\textbf{Tools Compared}. We compare \sys with six baselines. This includes Amazon $Textract$~\cite{textract}, and Azure AI Document Intelligence (\textit{AzureDI} for short)~\cite{azure}, which detect and extract tables and key-value pairs from PDFs. We also include two state-of-the-art vision-based LLMs, $vLLM$-$S$ and $vLLM$-$C$, both using \code{gpt4vision}~\cite{gpt4vision} from OpenAI. $vLLM$-$S$ uses a prompt to identify the template structure and then extracts data, while $vLLM$-$C$ directly extracts all key-value pairs from each page. For a table, $vLLM$-$C$ extracts each cell and its header as key-value pairs, whereas for key-value blocks, it outputs a list of key-value pairs. \rone{We also considered LLMs based on OCR text, where we provide hints that records are generated from the same template and append multiple example records. As vision-based LLMs outperform any text-based LLMs we tried, we omit them below. The prompts for LLMs baselines are in Appendix~\ref{subsec:appendix-prompts}. }
\update{Finally, we compare \sys with {\em Evaporate}~\cite{arora2023language}, a tool that extracts structured data from documents using LLMs. Evaporate operates on OCR (text) output from documents.  Evaporate has two versions: {\em Evaporate-Direct} ({\em Eva-D} for short), which uses LLMs to directly identify fields and extract values, and {\em Evaporate-Code} ({\em Eva-C} for short), which uses LLMs to generate extraction scripts and then applies those scripts to extract values.  } 
\rtwo{\sys uses Pdfplumber~\cite{pdfplumber} as the OCR tool, which can be replaced by alternatives such as Tesseract~\cite{tesseract} or MistralOCR~\cite{mistral}, which extract phrases along with bounding boxes. }

\begin{table}[tb]
\vspace{-2mm}
\small
    \centering
    \begin{adjustbox}{width=0.6\textwidth,center}
    \begin{tabular}{|c||c|c|c|}
    \hline
    \multicolumn{4}{|c|}{\textbf{Q-Benchmark} } \\ \hline
    & \textbf{\# of Datasets} & \textbf{Avg Dataset Size (tokens)} & \textbf{Avg  \# of Doc}   \\ \hline
    \code{Easy} & 6 & 5902  & 11.2 \\ \hline 
    \code{Medium} & 21 & 7813 & 6.8 \\ \hline 
    \code{Hard} & 7 & 6417 & 5.1 \\ \hline 
    \multicolumn{4}{|c|}{\textbf{S-Benchmark} } \\ \hline
    \code{Easy} & 23 & 2, 106, 218  & 2117 \\ \hline 
    \code{Medium} & 6 & 2, 070, 688 & 2191 \\ \hline 
    \code{Hard} & 1 & 1, 059, 775 & 1577 \\ \hline 
    \end{tabular}
    \end{adjustbox}
 \caption{\small Characteristics of 64 Datasets.} 
    \vspace{-2em}
    \label{tab:datasets}
\end{table}

\begin{table*}[]
\begin{adjustbox}{width=1\textwidth,center}
\begin{tabular}{|c|c| c| c|c|c|c|c||c| c| c| c|c|c|c|}
\hline
\multirow{2}{*}{} & \multicolumn{7}{c||}{\textbf{Precision} } & \multicolumn{7}{c|}{\textbf{Recall} }  \\  \cline{2-15}
& Textract & vLLM-S & vLLM-C & AzureDI & Eva-D & Eva-C &  \sys & Textract & vLLM-S & vLLM-C & AzureDI & Eva-D & Eva-C & \sys \\ \hline
\code{Easy} &	0.9 & 0.74 & 0.73 & 0.54 & \revised{0.47} & 0.32 & \textbf{0.96} & 0.88 & 0.62 & 0.68 & 0.68 & \revised{0.52} & 0.25 & \textbf{0.98} \\ \hline
\code{Medium} & 0.5 & 0.63 & 0.6 & 0.49 & \revised{0.41} & 0.33 & \textbf{0.89} & 0.51 & 0.57 & 0.5 & 0.62 & \revised{0.33} & 0.26 & \textbf{0.9} \\ \hline
\code{Hard} & 0.38 & 0.59 & 0.64 & 0.35 & \revised{0.45} & 0.35 & \textbf{0.85} & 0.44 & 0.49 & 0.57 & 0.66 & \revised{0.44} & 0.26 & \textbf{0.91} \\ \hline
\end{tabular}
\end{adjustbox}
\caption{\small Precision and Recall on \code{Easy}, \code{Medium} and \code{Hard} Datasets on Q-Benchmark. }
\label{table:quality}
\vspace{-1em}
\end{table*}

\begin{table*}[]
\vspace{-15pt}
\begin{adjustbox}{width=1\textwidth,center}
\begin{tabular}{|c|c| c| c|c|c|c|c||c| c| c| c|c|c|c|}
\hline
\multirow{2}{*}{} & \multicolumn{7}{c||}{\textbf{S-Precision} } & \multicolumn{7}{c|}{\textbf{S-Recall} }  \\  \cline{2-15}
& Textract & vLLM-S & vLLM-C & AzureDI & Eva-D & Eva-C &  \sys & Textract & vLLM-S & vLLM-C & AzureDI & Eva-D & Eva-C & \sys \\ \hline
\code{Easy} &	0.74 & 0.68 & N/A & 0.35 & 0.27 & N/A & \textbf{0.94} & 0.66 & 0.61 & N/A & 0.38 & 0.32 & N/A & \textbf{0.92} \\ \hline
\code{Medium} & 0.61 & 0.54 & N/A & 0.42 & 0.19 & N/A & \textbf{0.81} & 0.64 & 0.49 & N/A & 0.55 & 0.2 & N/A & \textbf{0.76} \\ \hline
\code{Hard} & 0.49 & 0.51 & N/A & 0.19 & 0.18 & N/A & \textbf{0.8} & 0.36 & 0.48 & N/A & 0.17 & 0.19 & N/A & \textbf{0.76} \\ \hline
\end{tabular}
\end{adjustbox}
\caption{\small \rone{S-Precision and S-Recall on \code{Easy}, \code{Medium} and \code{Hard} Datasets on Q-Benchmark. N/A indicates that the tool is unable to produce output while preserving structures such as tables and key-value pairs.  }}
\label{table:structure-quality}
\end{table*}


\noindent\textbf{Metrics}. We report precision and recall to evaluate the quality of the extracted results. \rone{For \sys, if an extracted object is a table, we convert it to a list of key-value pairs for each cell with its corresponding key in the header. If a cell contains a missing value, we include $(key, missing)$ as part of the output. Key-value blocks naturally represent key-value pairs.} We similarly transform the extracted results from the baselines into a list of key-value pairs as the extraction result for each document per dataset. 
For a document $D$, let $KV_p$ and $KV_t$ be the inferred key-value pairs and true key-value pairs for $D$. Precision $P_D = \frac{|KV_p \cap KV_t|}{|KV_p|}$, while recall $R_D = \frac{|KV_p \cap KV_t|}{|KV_t|}$. For a dataset $\mathcal{D} = \{D_1, D_2, \dots\}$, $P_{\mathcal{D}}$ and $R_{\mathcal{D}}$ represent the average precision and recall over documents in $\mathcal{D}$. We report average precision and recall on \code{Easy}, \code{Medium}, and \code{Hard} datasets.

\rone{We further evaluate how tools predict the {\em structure} of extracted data using {\em structure-precision} ({\em S-Precision}) and {\em structure-recall} ({\em S-Recall}). We first define S-Precision and S-Recall, denoted by $SP$ and $SR$, between a true table $T$ and a predicted table $T'$, which may have different schemas, and then extend it to more complex  settings. We present an example in Appendix~\ref{subsec:appendix-structure-eval} to illustrate $SP/SR$. }

\rone{Let the precision and recall between two tuples $t_i \in T$ and $t'_j\in T'$ be $P_{ij}$ and $R_{ij}$, respectively. A tuple can be viewed as a list of key-value pairs (column-cell pairs), and let $KV_i$ and $KV_j$ be key-value pairs in $t_i$ and $t'_j$. Then,  $P_{ij} = \frac{|KV_i \cap KV_j|}{|KV_i|}$, and $R_{ij} = \frac{|KV_i \cap KV_j|}{|KV_j|}$. To move from individual tuple pairs to tables, we identify a matching $M$ of matched tuple pairs  between $T$ and $T'$.  $M$ is computed by finding a minimum-distance matching in the bipartite graph where $T$ and $T'$ each represent a partition. Each tuple is a node and the distance of edge $(t_i,t'_j)$  is $1-\frac{|KV_i \cap KV_j|}{|KV_i\cup KV_j|}$. We present a  dynamic programming algorithm to find $M$ in Appendix~\ref{subsec:appendix-structure-eval}.
Finally, $SP = \sum_{(t_i,t'_j)\in M}P_{ij}/|T'|$ and $SR = \sum_{(t_i,t'_j)\in M}P_{ij}/|T|$. For two key-value blocks, their $SP$ and $SR$ can be computed using the same metric, as a key-value block can be viewed as a table with a single row.}


\rone{We now extend $SP$ and $SR$ to handle nested and mixed data blocks. 
Intuitively, evaluating nested tables requires capturing the co-occurrence of a nested table and the tuple it is nested under. For any nested table $T$, we identify the tuple $t$ it is nested under. We then construct a flattened table from $T$ with a schema formed by concatenating the schemas of $t$ and $T$, where each tuple is the concatenation of $t$ and a tuple from $T$. For a tuple without a nested table, we preserve it by appending NULL values under the unmatched columns (i.e., a left outer-join). We recursively flatten tables when multiple levels of nesting exist. For a document $D$, we concatenate true and predicted data blocks into two wide tables $T$ and $T'$ for ease of evaluation.  Concatenating any two tables $T_i$ and $T_j$ is equivalent to performing a full outer join, i.e., combining their schemas and inserting NULLs for unmatched columns. Finally, we report $SP$ and $SR$ over the true and predicated concatenated tables $(T,T')$. }

\rone{Both $P/R$ and $SP/SR$ are useful: $SP/SR$ emphasizes structure and can heavily penalize upstream errors, while $P/R$ treats all errors equally but does not  consider structure. For example, a mistake in a single key-value pair is counted once for $P/R$ but can be penalized as many times as the number of tuples nested below it for $SP/SR$. 
}


\ifshowblock
\begin{lstlisting}
Prompt of vLLM-S: 
Please extract all key-value pairs from the following image and output only the same JSON template below. For key-value blocks, extract the pairs directly. For tables, output a key-value pair for every cell using the table headers as keys. Do not attempt to interpret the overall structure; simply extract and present all key-value pairs as they appear.
JSON Template:
[
    {
        "content": [
            {
                "type": "table",
                "content": [
                    {
                        "key1": "value1",
                        "key2": "value2"
                    },
                    {
                        "key1": "value3",
                        "key2": "value4"
                    },
                    ... more key-value pairs ...
                ]
            },
            {
                "type": "kv",
                "content": [
                    {
                        "key1": "value1"
                    },
                    ... more key-value pairs ...
                ]
            }
        ]
    }
]
\end{lstlisting}
\fi
\ifshowblock
\begin{lstlisting}
Prompt of vLLM-C: 
Please extract ALL key-value pairs from the following image and output only the same JSON template below. For key-value blocks, extract the pairs directly. For tables, output a key-value pair for every cell using the table headers as keys. Do not attempt to interpret the overall structure; simply extract and present all key-value pairs as they appear. 
JSON Template:
{
    "key1" : "value1",
    "key2" : "value2",
}
\end{lstlisting}
\fi


\vspace{-3mm}
\subsection{Experimental Results} 
\label{subsec:exp-results}
\noindent\textbf{Experiment 1: Quality Comparisons.} 
We report the precision and recall for all tools in Table~\ref{table:quality} on Q-Benchmark. In \code{Easy} datasets, both Textract and \sys perform well, with \sys outperforming Textract by around 8\% in precision and 9\% in recall. Vision-based LLM approaches, Evaporate, and AzureDI struggle even on simple templates. \update{Eva-C uses LLM-generated scripts, encoding heuristic rules, to extract field values, which turns out to be not robust. Vision-based LLMs outperform Eva-D by using images instead of text, enabling better use of visual information. 
 We will present a detailed error analysis for all baselines subsequently in Table~\ref{tab:error}. } 
\sys significantly outperforms all baselines in both  \code{Medium} and \code{Hard} datasets, achieving around \textbf{\textit{38\% higher precision and 39\% higher recall compared to Textract, 44\% higher precision and 25\% higher recall compared to AzureDI, \update{39\% higher precision and 53\% higher recall compared to the best version of Evaporate},  and 25\% higher precision and 33\% higher recall compared to the best vision-LLM approach}}.  
\rone{We also report structure-precision and structure-recall in Table~\ref{table:quality}, and \sys consistently outperforms all baselines on datasets with varying difficulty. Notably, \sys achieves 31\% higher S-Precision and 22\% higher S-Recall than the best baseline. \sys is able to infer complex nested structures (on \code{Hard} datasets), achieving 80\% S-Precision and 76\% S-Recall. 
}
All baselines struggle with complex layouts, leading to a significant drop in quality. This highlights the importance of inferring the template, making downstream extraction more accurate, compared to general-purpose data extraction solutions that ignore the  template.


\begin{figure}[tb]
    \centering
    \includegraphics[width=1\linewidth]{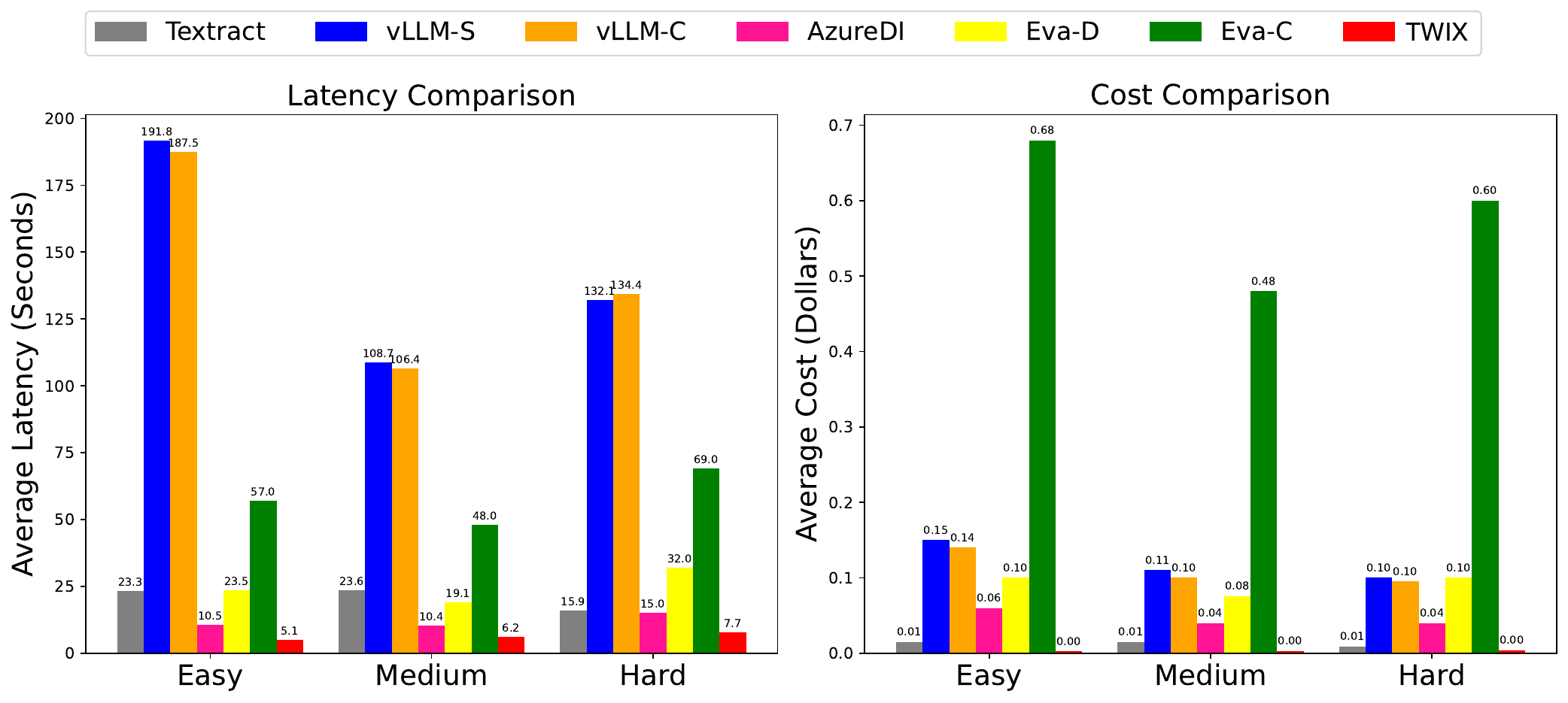}
    \vspace{-2em}
    \caption{\small Latency (Left) and Cost (Right) on Q-Benchmark.}
    \label{fig:latency_cost}
\end{figure}







\vspace{1mm}
\noindent\textbf{Experiment 2: Latency Comparisons.}
We compare the end-to-end latency \update{in Q-Benchmark} in Figure~\ref{fig:latency_cost} (left). Vision LLM approaches are time-consuming, as they process each page as an image. Their latency depends on the number of pages and the size of the output tokens. Data extraction tasks typically return a large number of output tokens, making image-based extraction costly. Textract and AzureDI are much faster than vLLM-based baselines, taking around 21 and 11s to extract data for documents with an average of 10.4 pages per dataset, respectively. \sys is the most efficient tool, taking around 5s to process a dataset—\textbf{\textit{2$\times$ faster than AzureDI,  4$\times$ faster than Textract, \update{7$\times$ faster than Evaporate}, and 28.6$\times$ faster than vLLM-based approaches}}.


\vspace{1mm}
\noindent\textbf{Experiment 3: Cost Comparisons.} 
Figure~\ref{fig:latency_cost} presents the end-to-end cost of all tools in the Q-Benchmark (right). Most baselines charge per page. Textract and AzureDI have fixed rates of \$1.50 and \$10 per 1000 pages, respectively~\cite{textract,azure}, while GPT-4-Vision APIs charge based on the number of pages and image resolution (treating each page as an image)~\cite{gptprice} and Eva-D using GPT-4o charges based on the number of tokens in the datasets. \update{Eva-C uses LLMs to generate scripts for extracting values for each field, taking raw text converted from documents as input. While data extraction using the scripts is free, the script generation step incurs high cost. }  
\sys incurs costs only during template inference, where LLMs filter non-field phrase clusters. The subsequent template-based extraction is LLM-free, incurring no further cost. Across all datasets, \sys achieves an average cost of less than \$0.003 per dataset, representing \textbf{\textit{23\% of Textract's cost, 6.4\% of AzureDI's cost, 1.2\% of Evaporate's cost}, and {\em 0.8\% of vision-LLM-based tools' cost}}.


\vspace{1mm}
\noindent\textbf{Experiment 4: Scalability Comparisons.} 
To evaluate how the tools scale to large datasets, \update{we evaluate their cost and latency on S-Benchmark, around 300$\times$ larger than Q-Benchmark, in Table~\ref{tab:scalability}. }


\sys can scale to large datasets easily with around 4 minutes and $\$$0.014 per dataset. This is because \sys infers the template from a small portion of the document determined by our pruning strategy and then extracts data for remaining documents based on the inferred template without invoking LLMs.  \sys incurs no additional cost as the number of documents grows and introduces negligible latency since template-based data extraction is inexpensive. 
\update{Eva-C uses LLMs to generate scripts that extract values for fields. Similar to \sys, their data extraction is no-cost using scripts. Even in this case, \sys still achieves noticeable advantages in both latency and cost, while also delivering over 50\% higher precision and recall based on the results in Q-Benchmark. } 
All other baselines' latency and cost increase linearly with the number of documents. Vision-based LLMs, for example, take over 30 hours to process around 2000+ pages at over \$50. \textbf{\textit{\update{\sys completes the task in a few minutes—ranging from 12$\times$ to 174$\times$ faster and 1500$\times$ to 2423$\times$ cheaper than all baselines except Eva-C.}} }



\vspace{1mm}
\noindent\textbf{Experiment 5: Time and cost breakdown of \sys.} 
\sys consists of three components: 1) phrase extraction extracting text from documents using OCR tools; 2) template inference, where \sys infers fields and structure of the template; and 3) data extraction using the inferred template. We provide a breakdown of latency for these three components in Figure~\ref{fig:breakdown} on both benchmarks. The numbers represent the percentage of latency for each component.


\begin{table}[]
    \centering
    \footnotesize
    \begin{tabular}{|c|c|c|c|c||c|c|c|c|}
    \hline
     & \multicolumn{4}{c||}{Latency (Minutes)}  & \multicolumn{4}{c|}{Cost (Dollars)} \\ \hline 
   Tools & \texttt{Easy} & \texttt{Medium} & \texttt{Hard} & \texttt{AVG} & \texttt{Easy} & \texttt{Medium} & \texttt{Hard} & \texttt{AVG}\\ \hline
     Textract & 59 & 61 & 41 & 59 & 31.8 & 32.9 & 23.7 & 31.6 \\ 
     vLLM-S & 1977 & 1810 & 876  & 1901 & 54.5 & 52.3 & 43 & 54 \\ 
     vLLM-C & 2195 & 2119 & 1008  & 2133 & 53.2  & 50.8 & 41.9 & 52.4 \\ 
     AzureDI & 82 & 49 & 30 & 74 & 21.1 & 21.9 & 15.8 & 21.1 \\
     Eva-D & 48 & 54 & 47 & 48 & 24.5 & 27.1 & 13.6 & 24.1 \\ 
     Eva-C & 9 & 11 & 10 & 9 & 4.3 & 4.2 & 3.9 & 4.3 \\ 
     \sys & \textbf{4.2} & \textbf{4.3}  & \textbf{3.3} & \textbf{4.1}  & \textbf{0.014} & \textbf{0.015}  & \textbf{0.012} & \textbf{0.014} \\ \hline 
    \end{tabular}
 \caption{\small Latency and Cost on S-Benchmark.} 
    \vspace{-2em}
    \label{tab:scalability}
\end{table}

\begin{figure*}[t]
  \centering
  \begin{minipage}[t]{0.49\textwidth}
    \centering
    \includegraphics[width=6.6cm,height=5cm]{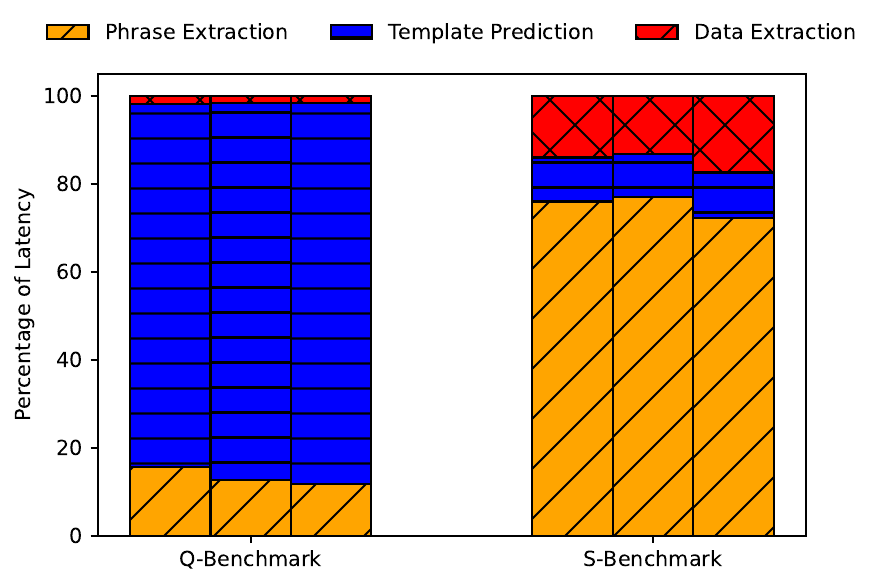}
    \vspace{-1em}
    \caption{\small Latency Breakdown.}
    \label{fig:breakdown}
  \end{minipage}%
  \hfill
  \begin{minipage}[t]{0.48\textwidth}
    \centering
    \includegraphics[width=6.5cm,height=5cm]{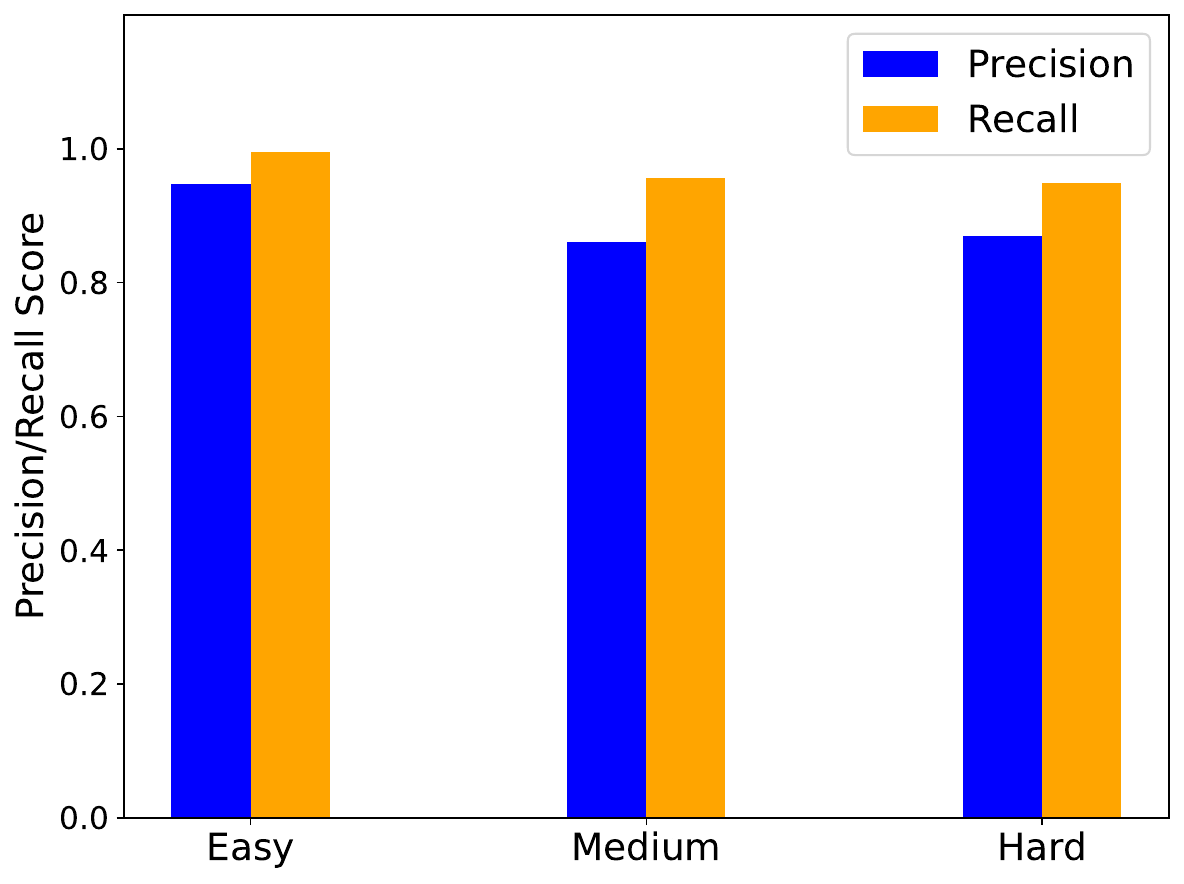}
    \vspace{-1em}
    \caption{\small Field Inference.}
    \label{fig:field-prediction}
  \end{minipage}
\end{figure*}


When the dataset is relatively small in Q-Benchmark,  template inference consumes around 85\% of the time in \code{Easy}, \code{Medium}, and \code{Hard} datasets, and data extraction is the fastest component. However, as the dataset size grows, \update{as in S-Benchmark}, the most time-consuming component becomes phrase extraction, as its time increases linearly with the number of documents, while the latency of template inference remains constant since \sys infers the template from a small document portion, thanks to the effective pruning strategy.  \update{In terms of cost breakdown, all costs in \sys are incurred during the template inference stage where LLMs are used. }


\vspace{1mm}
\noindent\textbf{Experiment 6: Performance of Field Inference}. We examine the performance of the field inference in \sys for Q-Benchmark, and report precision and recall in Figure~\ref{fig:field-prediction}. 

In \code{Easy} datasets with simple template structure, the precision and recall of the inferred fields are around 0.95 and 0.98, respectively. When the template becomes more complex, as in the \code{Medium} and \code{Hard} datasets, the precision drops to around 0.86 while the recall still remains close to 0.95. A high recall is important for predicting the template structure since the set of key rows will likely be recovered, which are the backbone of tables.

Template structure inference helps correct field inference errors. First, metadata (headers, footers, etc.) incorrectly inferred as keys or values are often corrected, as they rarely align with rows in table or key-value blocks, violating constraints in the row labeling problem. Second, false positives of fields are frequently corrected after row labeling. For example, the false positives in a value row will be corrected once the row is identified as a value row. These observations highlight \sys's robustness in field inference.




\noindent\textbf{Experiment 7: Error Pattern Analysis}. We identify six common error types observed in the compared tools below and summarize them  in Table~\ref{tab:error}, where we use $\checkmark\checkmark$ and $\checkmark$ to indicate errors that {\em most significantly affect} and {\em moderately affect} each tool, respectively. 
\techreport{Note that all tools exhibit some degree of error in every type. For errors that have minimal impact on a tool's performance, no corresponding check mark is included in the table}.


\noindent\textbf{Type (1): Column Misalignment}. When multiple consecutive values are missing in a row in a table block, vision LLMs \update{and Evaporate} often fail to count the exact number of missing values, resulting in mismatched values and rows.
\textbf{Type (2): Misidentification of Table Headers}. AzureDI fails to identify the table header correctly, instead selecting the first row as the header in 4 out of 34 datasets. Textract similarly misidentifies table headers, particularly in documents with complex layouts. 
\textbf{Type (3): Misidentification of Data Blocks}. All baselines struggle to accurately identify data blocks in complex layouts, such as in \code{Medium} and \code{Hard} datasets. They may miss data blocks entirely or merge multiple blocks into one. 
For example, vision-based LLMs show inconsistent behavior, occasionally extracting partial data or missing entire blocks. 
\textbf{Type (4): Phrase Extraction Errors from OCR}. OCR-based phrase extraction is imperfect and affects all tools. OCR may merge closely spaced phrases into one or split a long phrase into multiple parts. While such errors are relatively infrequent, they can still impact results, particularly when fields like table headers are extracted incorrectly. 
\textbf{Type (5): Metadata Misclassified as Fields or Values}. If a metadata row is misclassified as a value row and aligns with a table header, \sys incorrectly includes it in the table block, causing false positives. Similar behavior is observed in other tools; vision-LLMs frequently create new fields for metadata and incorporate them into data blocks. 
\textbf{Type (6): Field Inference Errors}. None of the tools achieves perfect field inference, which affects downstream tasks such as table or key-value block inference in the baselines or row labeling in \sys.

\begin{table}[]
    \centering
    \footnotesize
    \begin{tabular}{|c|c|c|c|c|c|c|}
    \hline
     & \textbf{Type 1} & \textbf{Type 2} & \textbf{Type 3} & \textbf{Type 4} & \textbf{Type 5} & \textbf{Type 6} \\ \hline
     Textract &  & $\checkmark\checkmark$ & $\checkmark\checkmark$ & $\checkmark$& $\checkmark$ & $\checkmark\checkmark$ \\ \hline
     vLLM-S & $\checkmark\checkmark$ & & $\checkmark\checkmark$ & $\checkmark$& $\checkmark$ & $\checkmark$\\ \hline
     vLLM-C & $\checkmark\checkmark$ & & $\checkmark\checkmark$ & $\checkmark$& $\checkmark$ & $\checkmark$\\ \hline
     AzureDI & & $\checkmark\checkmark$ & $\checkmark\checkmark$ & $\checkmark$ & $\checkmark$ & $\checkmark\checkmark$ \\ \hline
     Eva-D & $\checkmark$ &  & $\checkmark$ & $\checkmark$& $\checkmark\checkmark$ & $\checkmark\checkmark$ \\ \hline 
     Eva-C & $\checkmark\checkmark$ &  & $\checkmark\checkmark$ &$\checkmark$ & $\checkmark$ & $\checkmark\checkmark$ \\ \hline 
     \sys & & &  & $\checkmark$ & $\checkmark$ & $\checkmark\checkmark$ \\ \hline
    \end{tabular}
 \caption{\small Frequent Error Patterns of Compared Tools ($\checkmark\checkmark$ indicates frequent, and $\checkmark$ indicates sometimes). } 
    \vspace{-0.4em}
    \label{tab:error}
\end{table}

\rtwo{
\topic{OCR Error Analysis} We describe below the three most common OCR errors, their impact on \sys, and strategies to handle them. 
{\bf 1) Phrase Misspelling.} OCR may produce misspelled phrases (e.g., \code{Date} misread as \code{Dafe}). If the error occurs in a value, the effect is local and minimal. If it occurs in a field, the errors are  {\em consistent} across records due to the shared template (e.g., all instances of "Date" are misread as "Dafe"). This consistency allows \sys to correctly infer the template. LLMs can be used during post-processing to semantically correct such errors in fields, though that is beyond the scope of \sys. 
{\bf 2) Bounding Box Extraction Inaccuracy.} \sys uses bounding boxes for row alignments when inferring template structure. Bounding boxes, if
imprecise, are typically only slightly over-sized and are still reliable to determine vertical alignment.  {\bf 3) Multi-row Phrase Extraction. } When a phrase spans multiple rows within a table cell, OCR may extract it as separate phrases. While improving OCR is beyond the scope of \sys, multi-row phrases can be extracted by either using visual cues (e.g., cell borders or lines) to merge the fragments into a single phrase or applying semantic checks with LLMs to determine whether the split phrases should be merged.  }

\section{Related Work}
\label{sec:relatedwork}

Our work is relevant to document and web data extraction, as well as document layout analysis. 

\noindent\textbf{Document Data Extraction}. 
There have been multiple papers~\cite{arora2023language,tata2021glean,aggarwal2021form2seq,sarkhel2021improving} and industrial tools~\cite{textract,google-doc,microsoft-doc} aimed at extracting data from documents. Most extraction tools,  e.g.,~\cite{textract,google-doc,microsoft-doc,aggarwal2021form2seq,sarkhel2021improving,arora2023language} are general-purpose solutions for extracting tables or entities from visually rich form-like documents. \update{Evaporate~\cite{arora2023language} uses LLMs to generate scripts that are then used for data extraction. These scripts encode heuristic rules that are not robust for reliable extraction. }

Tools from industry like Textract~\cite{textract}, Google Document AI~\cite{google-doc}, and Azure Document Intelligence~\cite{microsoft-doc} use pretrained models to extract structured data, such as tables or key-value pairs, from form-like documents. These tools perform well on simple layouts and domains well-represented in their training data (e.g., Google Document AI offers models for specific domains like taxes or invoices). However, they often fail on unseen documents with complex layouts and incur significant costs and latency. Recent advances in vision-based LLMs, such as GPT-4 Vision~\cite{gpt4vision}, show promise but lack consistent performance, with high latency and substantial costs.


Learning-based extraction~\cite{tata2021glean,or2021few,katti2018chargrid,bai2017neural,majumder2017deep,paliwal2019tablenet,le2014flashextract,yang2022survey} retrieves values for user-specified fields from documents by training deep learning models on human-labeled data. However, it requires significant human effort (e.g., specifying fields and labels) and doesn't capture relationships between extracted field values, as discussed in Section~\ref{sec:introduction}.  In contrast, our unsupervised approach uncovers the template—the backbone of templatized documents—and efficiently extracts structured data. For example, Parthasarathy et al.~\cite{parthasarathy2022landmarks} introduce the concept of landmarks to narrow down the region of interest in a document and then extract values for user-specified fields, while \sys preserves relationships in the extracted data.



\noindent\textbf{Web Data Extraction}. Our work is also related to web data extraction, which primarily relies on HTML tags~\cite{parameswaran2011optimal,dalvi2009robust,chen2022web,cetorelli2021smallest,dalvi2011automatic,dalvi2009web,bohannon2012automatic,dalvi2012analysis,etzioni2004web,agichtein2000snowball,sarawagi2008information,sarkhel2021improving,niu2012deepdive,arasu2003extracting,crescenzi2004automatic,kayed2009fivatech,liu2009vide}. 
The earliest work in this vein analyzed differences
between pages generated using the same HTML template to learn the template through techniques like regular expressions~\cite{arasu2003extracting,crescenzi2004automatic,kayed2009fivatech,liu2009vide}.
Other papers extended
this to  develop robust wrappers to handle the evolution of underlying HTML templates in web pages over time~\cite{parameswaran2011optimal,dalvi2009robust}, while others explored generating domain-centric wrappers for web-scale information extraction, designed to tolerate noise~\cite{dalvi2011automatic,dalvi2009web,bohannon2012automatic,dalvi2012analysis}. 
For example, Miria~\cite{chen2022web} extracts records from websites by identifying invariants across records based on HTML tag trees. Cetorelli et al.~\cite{cetorelli2021smallest} introduce a landmark-based grammar from a set of web pages with a common HTML template.  To extract relations from semi-structured web data, Lockard et al.~\cite{sarkhel2021improving} proposed a distant supervision approach, while DeepDive~\cite{niu2012deepdive} further leveraged XML and HTML-specific feature descriptors. 
However, HTML or XML tags present in web-pages or semistructured documents indicating nesting relationships are often not available in documents like PDFs. 

Several studies extract tables from the web without relying on HTML tags. Chu et al.~\cite{chu2015tegra} split phrases in record rows into cells aligned with corresponding columns, while Chen et al.~\cite{chen2013automatic} extract information from spreadsheets by leveraging font formatting to extract metadata and using learning-based methods to label rows. Cafarella et al.~\cite{cafarella2009web} extract fact tuples by parsing natural language sentences on the web. Finally, Gao et al.~\cite{gao2018navigating} explore an unsupervised approach to extract structures for log datasets.  These methods primarily focus on data with simple layouts or structures, like tables or logs, whereas our approach handles complex visual layouts, including nested combinations of tables and key-value structures.



\noindent\textbf{Document Layout Analysis}. Document layout analysis (DLA)~\cite{soto2019visual,xu2020layoutlm,binmakhashen2019document,long2022towards,patil2020read,tang2023unifying,breuel2003high,yang2017learning} detects various document layouts, such as pages, texts, tables, images, titles, headers, and footers, using visual features (e.g., font size and type), content, and structural patterns. While effective for coarse-grained components like text and table blocks, DLA struggles with fine-grained components, such as mixed key-value and table blocks. Combining DLA with our approach could potentially enhance data extraction by first detecting structured portions in long documents with text or images, allowing our method to handle the structured parts and expanding its applicability. This remains an interesting avenue for future work.









\section{Conclusion}
\label{sec:conclusion}
We present \sys, a robust, efficient, and effective tool to extract structured data from a collection of templatized documents. 
\sys first infers a flexible visual template used to create documents, using which
it separates templatized documents into nested records and data blocks within records, enabling accurate and efficient data extraction from each block. We demonstrate hardness for the underlying problems, while also providing correctness guarantees. \sys combines an optimization approach for principled template discovery, while leveraging LLMs to  provide semantic knowledge in carefully targeted ways. 
 Our experiments show that \sys outperforms baselines significantly across accuracy, latency, and cost.




\subsection*{Acknowledgments}

We acknowledge support from grants DGE-2243822, IIS-2129008, IIS-1940759, IIS-1955488, IIS-2027575, and IIS-1940757 awarded by the National Science Foundation, DOE award DE-SC0016260, AC02-05CH11231, DARPA Agreement No. HR00112590131, funds from the State of California, an NDSEG Fellowship, funds from the Alfred P. Sloan Foundation, as well as EPIC lab sponsors: Adobe, Bridgewater, Google, G-Research, Microsoft, PromptQL, Sigma Computing, and Snowflake. Compute credits were provided by Azure, Modal, NSF (via NAIRR), and OpenAI. 
\newpage
\bibliographystyle{ACM-Reference-Format}
\bibliography{refs}

\appendixtext{\newpage\clearpage
\appendix
\section{Appendix}

\subsection{Proof of Proposition 1}
\label{subsec:proof-proposition1}

\setcounter{proposition}{0}
\begin{proposition}
    Given the true template $T' = (V', E')$, when there exists a unique node  $v\in V', v.type = table$, $p_i,p_j \in v.fields$, if $L_{p_i} = L_{p_j}$, then $p_i$ and $p_j$ are a perfect match; if $L_{p_i} > L_{p_j}$, then $p_j$ and $p_j$ are a partial perfect  match. 
\end{proposition}

\noindent\begin{proof}[Proof of Proposition~\ref{prop:table}]
In the true template $T' = (V', E')$, when there exists a unique node  $v\in V', v.type = table$, $p_i,p_j \in v.fields$, if $L_{p_i} = L_{p_j}$, we have $\nexists v'\in V'$, $v \neq v'$ , s.t., $p_i \in v'.fields$ or $p_j\in v'.fields$, under the assumption that we make that each phrase $p$ has a unique label. Otherwise, $L_{p_i} \neq L_{p_j}$. Let the location vectors of $p_i$ and $p_j$ be $v_{p_i} = [i_1,i_2,...,i_m]$, $v_{p_j} = [j_1,j_2,...,j_m]$, respectively.   $\forall B_k$, where $v\rightarrow B_k$, let the index of $p_i$ and $p_j$ in $B_k$ be $i_k$ and $j_k$, respectively. $\forall k_1,k_2 \in [1,m]$, we have $i_{k_1} - j_{k_1} = i_{k_2} - j_{k_2}$, since $v\rightarrow B_{k_1}$ and $v\rightarrow B_{k_2}$, and the schema of table in the template is consistent across the records. This completes the first half of proposition. 

When  $L_{p_i} > L_{p_j}$, let $v_{p_i}^{'}$ be the subsequence of $v_{p_i}$ that occurs in blocks created from $v$, i.e., $\forall p_k \in v_{p_i}^{'}, p_k \in B$, where $v\rightarrow B$. Based on the above proof, $v_{p_i}^{'}$ is a perfect match with $v_{p_j}$, and thus $v_{p_i}$ is a partial perfect match with $v_{p_j}$. 
\end{proof}

\subsection{Proof of Proposition 2}
\label{subsec:proof-proposition2}

\setcounter{proposition}{1}
\begin{proposition}
\label{prop:kv}
    Consider the true template $T' = (V',E')$. Given a unique node  $v\in V', v.type = Key$-$Value$, $p_i,p_j \in v.fields$, and $f(p_i)=f(p_j)=True$, if $L_{p_i} = L_{p_j}$, then $p_i$ and $p_j$ are a perfect match; if $L_{p_i} > L_{p_j}$, then $p_j$ and $p_j$ are a partial perfect match. 
\end{proposition}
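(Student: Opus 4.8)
The plan is to mirror the structure of the proof of Proposition~\ref{prop:table}, adapting it to the key-value setting where the genuinely new complication is that field and value phrases are interleaved and some values may be missing. First I would record the two facts that carry over essentially verbatim. Because $v$ is a node of $T'$ and $p_i,p_j\in v.fields$, the block-generation definition ($v\rightarrow B$ forces every field of $v.fields$ to appear exactly once in $B$) guarantees that each block $B$ with $v\rightarrow B$ contains exactly one occurrence of $p_i$ and one of $p_j$; only values, never fields, can be absent. Under the standing assumption that each phrase has a unique label, $L_{p_i}=L_{p_j}$ forces both phrases to occur in no node other than $v$, so their occurrences are in exact block-by-block correspondence; whereas $L_{p_i}>L_{p_j}$ means $p_i$ additionally appears in other nodes, which I will handle by restricting attention to the subsequence $v_{p_i}'$ of $v_{p_i}$ arising from blocks generated by $v$.

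The crux, and the step that differs from the table proof, is showing that the index offset $ind(p_i)-ind(p_j)$ is identical in every block generated by $v$. In a table the header row fixes all field positions, so the offset is automatically constant; in a key-value block the phrases are emitted in reading order as, for each field in the block's fixed field order, the field phrase followed by its value phrase when present. I would write the offset as the count of phrases lying between the two field phrases and decompose it into (i) the number of intervening \emph{field} phrases, constant because the template fixes the field order of $v$, and (ii) the number of intervening \emph{value} phrases actually present. Term (ii) is precisely where $f$ enters: $f(p)=True$ means $p$'s value is present in all records or in none, so that value's contribution to the count is the same in every block. Hence the offset equals a fixed $\Delta$ across all blocks, which is exactly $M(v_{p_i},v_{p_j})=1$, once we also invoke $L_{p_j}>1$ to land in the non-degenerate case required by Definition~\ref{def:perfect-match}.

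For the second claim ($L_{p_i}>L_{p_j}$) I would apply the perfect-match argument to the restricted vector: $v_{p_i}'$ and $v_{p_j}$ have equal length and, by the offset computation above, a constant pairwise difference, so $M(v_{p_i}',v_{p_j})=1$. Since $v_{p_i}'\subseteq v_{p_i}$, the definition of partial perfect match yields $PM(v_{p_i},v_{p_j})=1$, completing the argument just as in the table case.

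The main obstacle I anticipate is term (ii): controlling the offset when values can be missing. Constancy of the offset really requires value-presence consistency not only for $p_i$ and $p_j$ but for \emph{every} field whose value sits between them in reading order, yet the hypothesis as stated constrains only $p_i$ and $p_j$. The delicate part of the proof is therefore to justify that the presence/absence of the contributing values is fully determined by the relevant $f$-values, arguing that within a single key-value block of $v$ the only phrases that can appear between the two field phrases are $v$'s own fields and their values. This is the bookkeeping the table proof never needed, and making it airtight is where I would either appeal to a regularity property of the block or make explicit the extension of the $f(\cdot)=True$ condition to the intervening fields.
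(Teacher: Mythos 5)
Your argument follows the same route as the paper's proof: for each block $B_k$ with $v\rightarrow B_k$ you show the offset between the occurrences of $p_i$ and $p_j$ is a constant $\Delta$, using that the template fixes the field order of $v$ and that $f(\cdot)=True$ fixes value presence; for $L_{p_i}>L_{p_j}$ you restrict $v_{p_i}$ to the subsequence arising from blocks generated by $v$ and reduce to the first case, exactly as in the proof of Proposition~\ref{prop:table}. The obstacle you single out in your term (ii) is, however, a genuine issue and not mere bookkeeping: the offset depends on the presence of the value of \emph{every} field lying between $p_i$ and $p_j$ in reading order, while the hypothesis constrains only $f(p_i)$ and $f(p_j)$. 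Concretely, if $v.fields$ contains a third field $p_l$ emitted between $p_i$ and $p_j$ with $f(p_l)=False$, the number of phrases separating the two occurrences changes from record to record and the match fails, even though $f(p_i)=f(p_j)=True$. The paper's own proof does not close this gap either; it passes directly from ``the values of $p_i$ and $p_j$ are consistently filled or missing'' and ``the list of fields is consistent across blocks'' to $i_{k_1}-j_{k_1}=i_{k_2}-j_{k_2}$, which is precisely the step you isolate. A clean fix is the one you anticipate: strengthen the hypothesis to $f(p_l)=True$ for all fields $p_l$ whose key--value pair can appear between the two occurrences (trivially satisfied when $p_i$ and $p_j$ are adjacent in $v.fields$ or when $f\equiv True$ on all of $v.fields$), after which your decomposition into intervening field phrases (constant by the template) and intervening value phrases (constant by $f$) goes through and yields $M(v_{p_i},v_{p_j})=1$, with the $L_{p_j}>1$ requirement of Definition~\ref{def:perfect-match} supplied by there being at least two records.
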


\noindent\begin{proof}[Proof of Proposition~\ref{prop:kv}]
Consider the location vectors of $p_i$ and $p_j$, $v_{p_i} = [i_1,i_2,...,i_m]$, $v_{p_j} = [j_1,j_2,...,j_m]$. 
    Under the assumption that a phrase $p$ has a unique label, when there exists a unique node  $v\in V', v.type = Key$-$Value$, $p_i,p_j \in v.fields$, and $f(p_i)=f(p_j)=True$, if $L_{p_i} = L_{p_j}$, $\forall B_k$, $v\rightarrow B_k$, let the index of $p_i$ and $p_j$ in $B_k$ be $i_k$ and $j_k$, respectively. 
    
    $\forall k_1,k_2 \in [1,m]$, $f(p_i)=f(p_j)=True$ implies that the values of $p_i$ and $p_j$ are consistently filled or missing across the records from the same key-value node in the template. Additionally, since the list of fields in two key-value blocks generated from the same key-value node are consistent across records, we have   $i_{k_1} - j_{k_1} = i_{k_2} - j_{k_2}$. Thus $p_i$ is a perfect match of $p_j$. When $L_{p_i} > L_{p_j}$, let $v_{p_i}^{'}$ be the subsequence of $v_{p_i}$ that occurs in blocks created from $v$, i.e., $\forall p_k \in v_{p_i}^{'}, p_k \in B$, where $v\rightarrow B$. Based on the above proof, $v_{p_i}^{'}$ is a perfect match with $v_{p_j}$, and thus $v_{p_i}$ is a partial perfect match with $v_{p_j}$. 
\end{proof}

\subsection{Proof of Theorem 1}
\label{subsec:proof-theorem1}

\setcounter{theorem}{0}
\begin{theorem}
\label{theo:problem}
    The following is equivalent to Eq.~(\ref{equ:goal}), \\ 
    \begin{align}\label{equ:obj}
        \max \sum_{r_i \in R} (& y_i^{K}\text{log}(Prob_{K}^{r_i}) + y_i^{V}\text{log} + y_i^{KV}\text{log}(Prob_{KV}^{r_i}) +  \nonumber \\    &y_i^{M}\text{log}(Prob_{M}^{r_i}))   \tag{9}
    \end{align}
\end{theorem}

\begin{proof}
First, we take $\log$ on objective~(\ref{equ:goal}), resulting in: 
\begin{equation}
\label{equ:log-objective}
    \max \sum_{r_i \in R} \log ( y_i^K \text{Prob}_K^{r_i} + y_i^V \text{Prob}_V^{r_i} + y_i^{KV} \text{Prob}_{KV}^{r_i} \\+ y_i^M \text{Prob}_M^{r_i})
\end{equation}
The objective in (\ref{equ:log-objective}) is equivalent to~(\ref{equ:obj}) since logarithms are monotonically increasing. However, this objective is still non-linear. Given the constraint $\forall r_i \in R, y_i^{K} + y_i^{V} + y_i^{KV} + y_i^{M} = 1$, a row $r_i$ can take exactly one label. For simplicity, let $z_i = \log(y_i^K \text{Prob}_K^{r_i} + y_i^V \text{Prob}_V^{r_i} + y_i^{KV} \text{Prob}_{KV}^{r_i} + y_i^M \text{Prob}_M^{r_i})$. If $y_i^{K} = 1$, then all other $y_i$, such as $y_i^{V}$, are 0. In this case, $z_i = y_i^{K}\log(\text{Prob}_K^{r_i})$. Similarly, when $y_i^{V} = 1$, $z_i = y_i^{V}\log(\text{Prob}_V^{r_i})$. This makes the objective in~(\ref{equ:log-objective}), $\max \sum_{r_i \in R} z_i$, equivalent to the linear objective function in~(\ref{equ:goal}). 
\end{proof}

\subsection{Proof of Theorem 2}
\label{subsec:proof-np-hard}

\setcounter{theorem}{1}
\begin{theorem}
\label{theo:np-hard}
    {\sc Row labeling} is {\sc NP-hard}. 
\end{theorem}

\subsection{Proof of Theorem 3}
\label{subsec:proof-theorem3}

\setcounter{theorem}{2}
\begin{theorem}
\label{theo:input}
    Let $F'$ be the true fields in the true template $T' = (V',E')$. If $\exists p\in F \cap F'$, and $\forall Rec$, $p$ appears exactly once in $Rec$, then $R'$ contains at least one record $Rec$ created from $T'$. 
\end{theorem}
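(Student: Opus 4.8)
The plan is to exploit the single distinguished phrase $p \in F \cap F'$ as an ``index'' on the records. The idea is that, because $p$ is a genuine field that occurs exactly once per record, its successive occurrences in $D$ fall into successive records; and because $p$ is simultaneously a \emph{predicted} field, the construction of $R'$ forces $p$ to occur at least twice inside $R'$. Sandwiching the span between $p$'s first two occurrences then traps a complete record inside $R'$.

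First I would pin down the correspondence between occurrences of $p$ and records. Since $p \in F'$ is a true field, the uniqueness-of-label assumption makes every occurrence of the text $p$ a field occurrence, so none lies in metadata or is a value; combined with the hypothesis that $p$ appears exactly once in every $Rec$, the ordered occurrences of $p$ in $D$ are in bijection with the records in document order, i.e.\ the $i$-th occurrence of $p$ sits in the $i$-th record $Rec_i$. Let the first two occurrences lie at row positions $\rho_1 < \rho_2$, inside $Rec_1 = [s_1,e_1]$ and $Rec_2 = [s_2,e_2]$ respectively, where records are contiguous row-ranges separated only by metadata, so $e_1 < s_2$. Second, I would invoke the definition of $R'$: because $p \in F$ is a predicted field and $R'$ is the prefix of $R$ obtained by scanning from $r_1$ until every predicted field has appeared twice, $R'$ contains at least two occurrences of $p$, so its last row $t$ satisfies $t \ge \rho_2$. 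Now $\rho_2 \in Rec_2$ gives $\rho_2 \ge s_2 > e_1$, hence $e_1 < \rho_2 \le t$, while $s_1 \ge 1$ is at or after the first row of $R'$. Therefore $Rec_1 = [s_1,e_1] \subseteq [1,t] = R'$, so $R'$ fully contains the record $Rec_1$, which is exactly what must be shown. (Extra predicted fields can only push $t$ larger, which never hurts the containment.)

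The main obstacle---indeed the only place requiring care---is justifying the clean bijection between $p$'s occurrences and records. This rests on two modeling facts that I would state explicitly: (i) records are contiguous row-ranges separated only by metadata, so that ``the end of $Rec_1$ precedes the start of $Rec_2$'' is well defined and there is no occurrence of $p$ before $Rec_1$; and (ii) the uniqueness-of-label assumption guarantees $p$ never appears as metadata or as a value, so no occurrence of $p$ is lost outside a record or duplicated within one. I would also dispatch the minor bookkeeping of passing between phrase indices (used in the location vectors) and row indices (used in the definitions of $R'$ and of records): this is immediate since each phrase belongs to a unique row and the two orderings are mutually consistent. Once these points are fixed, the conclusion reduces to the single inequality chain $e_1 < \rho_2 \le t$.
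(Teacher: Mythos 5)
Your proof is correct and follows essentially the same route as the paper's own (much terser) argument: since the predicted field $p$ occurs exactly once per record and $R'$ is a consecutive prefix containing $p$ at least twice, the span up to $p$'s second occurrence must enclose the entire first record. Your write-up simply makes explicit the bookkeeping (the occurrence--record bijection and the uniqueness-of-label assumption) that the paper leaves implicit.
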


\begin{proof}[Proof of Theorem~\ref{theo:input}]~\label{proof:input}
If there exist a correctly inferred field $p\in F$ that appears exactly once in every record, and $R'$ contains every field at least twice, then $R'$ contains at least one complete record given that $R'$ is a consecutive sublist of rows. 
\end{proof}

\subsection{Representation of Data Extraction Results} 
\label{susbec:data-extraction-objects}

\begin{figure}[tb]
    \centering
    \includegraphics[width=0.8\linewidth]{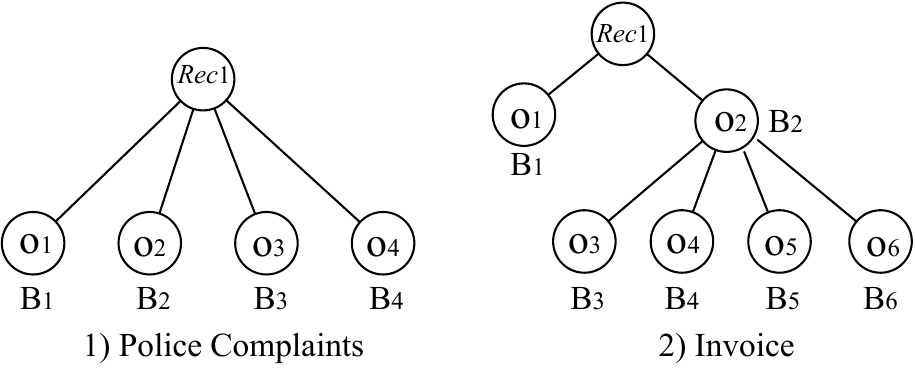}
    \caption{\small Data Extraction Objects of the First Record for Police Complaints and Invoices.}
    \label{fig:extraction-objects}
\end{figure}

We represent the extracted data per record into a tree.  Let $o_i$ denote the {\em data extraction object} of a data block $B_i$, associated with three attributes: $o_i.fields$, $o_i.type$, and $o_i.content$. Here, $o_i.fields = B_i.fields$, and $o_i.type \in \{\textit{Table}, \textit{Key-Value}\}$ based on the type of $B_i$. When $o_i.type = \textit{Key-Value}$, $o_i.content$ is a list of key-value pairs. When $o_i.type = \textit{Table}$, $o_i.content$ represents the extracted table, with $o_i.fields$ as the header and the corresponding extracted tuples. 

 \begin{figure*}[tb]
    \centering
    \includegraphics[width=0.7\linewidth]{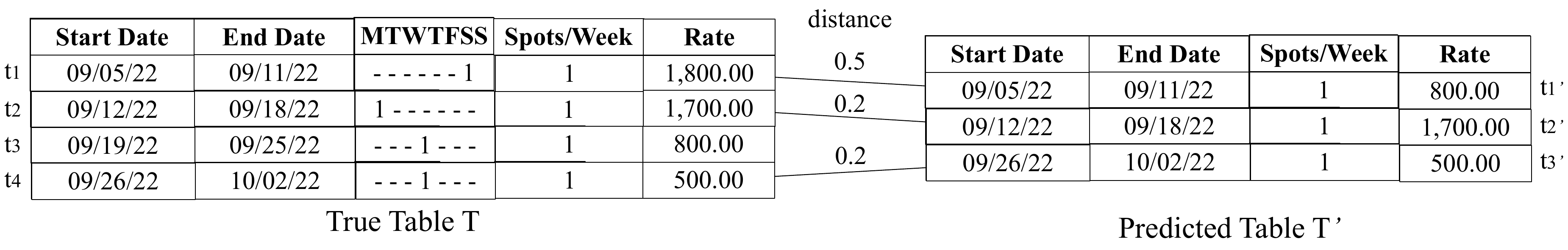}
    \vspace{-1em}
    \caption{\small \rone{Evaluating tables with variable schemas.} }
    \vspace{-1em}
    \label{fig:table-eval} 
\end{figure*}

\begin{figure*}[tb]
    \centering
    \includegraphics[width=1\linewidth]{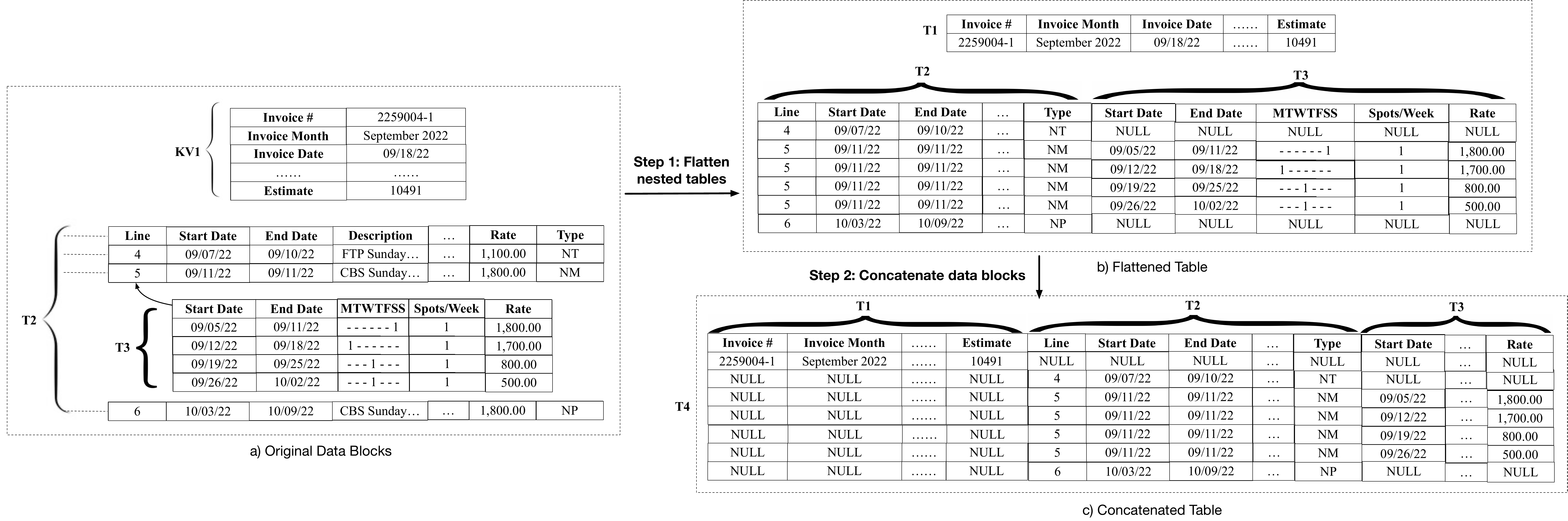}
    \caption{\small \rone{Data Transformation for Structure Evaluation.} }
    \label{fig:unnest} 
\end{figure*}

Let $O$ be the {\em data extraction object} of a record $Rec$, and $O$ is a tree, where the nodes are the set of data extraction objects of blocks in $Rec$. For two nodes $o_i,o_j$, if $o_i$ is a parent of $o_j$, then $B_i \cap B_j \neq \emptyset$ and $loc(B_i) < loc(B_i)$, i.e., $B_i$ appears before $B_j$ and $B_i$ overlaps with $B_j$. If $o_i$ is a left sibling of $o_j$, then $B_i \cap B_j = \emptyset$ and $loc(B_i) < loc(B_i)$.

\begin{example}
    The data extraction objects of the first record for police complaints and invoices are presented in Figure~\ref{fig:extraction-objects}. For $Rec_1$ in police complaints, four data extraction objects $o_1,o_2,o_3$ and $o_4$ are sequentially placed under the root, while in invoices, $o_2$ is the parent of $o_3$ as their blocks $B_2$ and $B_3$ overlaps and $B_2$ appears before $B_3$ in Figure~\ref{fig:invoice}. 
\end{example}

\subsection{Correctness Analysis of \sys in Compliant Documents}
\label{subsec:appendix-correctness}

We analyze the correctness of the returned data extraction objects in one commonly observed type of document that we call  {\em compliant documents}, defined below. 

\setcounter{definition}{0}
\begin{definition}
    [Compliant Document]. Consider a table block $B$. If every value phrase is uniquely vertically aligned with its field phrase, then $B$ is a \textit{compliant table block}. Consider a key-value block $B = [p_1,p_2, ..., p_m]$, where $p_i$ is a phrase.  If $\forall p_i\in B$, $p_i\notin B.fields$, we have $p_{i-1} \in B.fields$, then $B$ is a {\em compliant key-value block}. Given a document $D$, if $\forall B \in D$, $B$ is compliant, then $D$ is a {\em compliant document}. 
\end{definition}

In a compliant table block, every value row is vertically aligned with the header row in the table, while in a compliant key-value block $B$,  every value $p\notin B.fields$ has a preceding field.

\begin{theorem}
\label{theo:extraction}
For a compliant document $D$ whose Metadata rows are not well-aligned with any Key rows, the data extraction objects for $D$ are correct as long as the phrases $P$ in $D$ are extracted correctly and the inference of $T$ is correct. 
\end{theorem}

Many real-world templatized documents in our benchmark are complex due to intricate template structures. However, most individual data blocks—whether table or key-value blocks—are compliant. Across 34 real-world datasets with over 1,000 documents spanning diverse domains, 91\% of the documents are compliant. For the non-compliant cases, removing just 3\% of non-compliant phrases results in compliant documents. A non-compliant phrase refers to a table cell misaligned with its field or a value in a key-value block without a corresponding field. 
Thus, Theorem~\ref{theo:extraction} provides correctness guarantees for our approach across extensive real-world datasets, with a detailed proof below. At the end of Section~\ref{sec:exp}, we present an empirical analysis of scenarios where \sys fails. 
\begin{proof}[Proof of Theorem~\ref{theo:extraction}]~\label{proof:extraction}
If the inference of $T$ is correct, then the set of fields is accurately inferred for each data block. Any metadata row $r$ not aligned with a key row will be correctly identified as metadata in Algorithm~\ref{alg:block}. Furthermore, data block separation is correct if the inferred template $T$ matches the true template, as all records are generated using the same $T$. 
In a compliant table block, \sys correctly extracts all field-value mappings in the key row and its value rows since each value phrase is uniquely vertically aligned with its corresponding field phrase. Similarly, \sys accurately extracts key-value pairs in a compliant key-value block because every true value has a preceding key. This concludes the proof. 
\end{proof}

\subsection{Structure Evaluation}
\label{subsec:appendix-structure-eval}

\subsubsection{Example of Structure Evaluation}
\rone{We present the example to illustrate how to compute structure precision and recall, $SP$ and $SR$, defined in Section~\ref{subsec:exp-setup}, between a true table  $T$ and a predicted table $T'$, which may have different schemas. }

\topic{\rone{S-Precision and S-Recall}}  \rone{Let the precision and recall between two tuples $t_i \in T$ and $t'_j\in T'$ be $P_{ij}$ and $R_{ij}$, respectively. A tuple can be viewed as a list of key-value pairs (column-cell pairs), and let $KV_i$ and $KV_j$ be key-value pairs in $t_i$ and $t'_j$. Then,  $P_{ij} = \frac{|KV_i \cap KV_j|}{|KV_i|}$, and $R_{ij} = \frac{|KV_i \cap KV_j|}{|KV_j|}$. In Figure~\ref{fig:table-eval}, $P_{11} = 0.75$ and $R_{11} = 0.6$. To move from individual tuple pairs to tables, we identify a list of matched tuple pairs $M$ between $T$ and $T'$.  $M$ is computed by finding a minimum-distance matching in the bipartite graph where $T$ and $T'$ each represent a partition. Each tuple is a node and the distance of edge $(t_i,t'_j)$  is $1-\frac{|KV_i \cap KV_j|}{|KV_i\cup KV_j|}$. $M$ also preserves tuple order: a match as shown in Figure~\ref{fig:table-eval} is valid, whereas a match like $(t_1, t'_3), (t_2, t'_1)$ is not. 
We present a  dynamic programming algorithm to find $M$ in Algorithm~\ref{alg:tablematch} illustrated shortly.   Finally, $SP = \frac{\sum_{(t_i,t'_j)\in M}P_{ij}}{|T'|}$ and $SR = \frac{\sum_{(t_i,t'_j)\in M}P_{ij}}{|T|}$. For two key-value blocks, their $SP$ and $SR$ can be computed using the same metric, as a key-value block can be viewed as a table with a single row. For example, in Figure~\ref{fig:unnest}, a key-value block $KV_1$ is treated as a special table $T_1$. We now extend $SP$ and $SR$ to handle nested and mixed data blocks. }

\topic{\rone{Flattening Nested Tables}} \rone{Intuitively, evaluating nested tables requires capturing the co-occurrence of a nested table and the tuple it is nested under. For any nested table $T$, we identify the tuple it is nested under (e.g., $T_3$ in Figure~\ref{fig:unnest} is nested under the second tuple in $T_2$). We then construct a flattened table (in Figure~\ref{fig:unnest}-b) from $T$ with a schema formed by concatenating the schema of $t$ (from $T_2$) and $T$ (from $T_3$), where each tuple is the concatenation of $t$ and a tuple from $T$. In this way, $t$ co-occurs with every tuple from the nested table $T$ under it. For a tuple without a nested table (e.g., the first and third tuples in $T_2$), we perform a left  outer join to preserve this tuple by appending NULL values under the unmatched columns. We recursively flatten tables when multiple levels of nesting exist.}

\topic{\rone{Concatenating Data Blocks}}
\rone{For a document $D$, we convert the true and predicted data blocks into two wide tables $T$ and $T'$ for ease of evaluation. Concatenating any two tables $T_i$ and $T_j$ is equivalent to performing a full outer join, i.e., combining their schemas and inserting NULLs for unmatched columns. For example, $T_4$ in Figure~\ref{fig:unnest}-c is the concatenated table, a full outer-join of $T_1$ and the flattened table from $T_2$ and $T_3$ in Figure~\ref{fig:unnest}-b. 
Then, when evaluating precision and recall (and distance) for any tuple pair $(t_i, t'_j)$, where $t_i \in T$ and $t'_j \in T'$, we only retain the non-NULL values. Searching for a minimum-distance match between $T$ and $T'$ finds the {\em most similar} predicted blocks for each true block and returns their $SP$ and $SR$.    }

\rone{Finally, we report $SP$ and $SR$ over the concatenated tables from true and predicted data blocks, $T$ and $T'$, as the structure precision and recall for document $D$. }

\vspace{-2em}
\subsubsection{Table Match Algorithm}
\revised{We present a dynamic programming algorithm in Algorithm~\ref{alg:tablematch} to find the matched tuples between two tables $T$ and $T'$. A tuple can be viewed as a list of key-value pairs, where the key is a column and the value is its corresponding table cell. We compute the Jaccard similarity between two tuples $t_i \in T$ and $t_j \in T'$ as $Jaccard(t_i, t_j) = \frac{KV_i \cap KV_j}{KV_i \cup KV_j}$, where $KV_i$ and $KV_j$ are the sets of key-value pairs in $t_i$ and $t_j$, respectively. We call $1-Jaccard(t_i, t_j)$ as the distance between $t_i$ and $t_j$. Intuitively, matched tuple pairs in $T$ and $T'$ should have high similarity, and the list of matched pairs must preserve the tuple order in both tables. For example, a match as shown in Figure~\ref{fig:table-eval} is valid, whereas a match like $(t_1, t'_3), (t_2, t'_1)$ is not.  }

\revised{Let $dp[i][j]$ be the minimum total distance to align the first $i$ tuples in $T$ to a subsequence of tuples ending at the $j$-th tuple in $T'$. Let $t_i\in T$ and $t'_j \in T'$ be the $i$-th and $j$-th tuple in $T$ and $T'$, respectively. 
The algorithm initializes all $dp[i][j]$ values to infinity and sets $dp[1][j] = 1 - \text{Jaccard}(t_1, t'_j)$ by definition.  We additionally maintain $parent[i][j]$ to store the tuple pair that yields the minimum total distance, enabling recovery of the matched tuples via backtracking (Line 2-10).  To compute $dp[i][j]$, we consider the minimum total cost of $dp[i][k]$ over all $k < j$, since the alignment must preserve order. Thus, $dp[i][j] = \min_{k < j} \left\{ dp[i-1][k] + 1 - \text{Jaccard}(t_i, t'_j) \right\}$ (Line 11-18). Finally, we recover the matched tuples by backtracking through the stored $parent$ values (Line 19-25) }

\setlength{\textfloatsep}{0pt}
\begin{algorithm}[bt]
    \small
    \caption{\textsc{TableMatch}$(T,T')$}
    \label{alg:tablematch}
\textbf{Input:} $T,T'$\\
$n \leftarrow |T|, m\leftarrow|T'|;$\\ 

\textit{// initialization}\\
\For{$i\gets 0$ \textbf{to} $n$}{
    \For{$j\gets 0$ \textbf{to} $m$}{
        $dp[i][j]\leftarrow +\infty$\;$parent[i][j]\leftarrow -1$
    }
}

\For{$j\gets 1$ \textbf{to} $m$}{
    $dp[1][j]\leftarrow (1-Jaccard(t_1,t'_j))$\;$parent[1][j]\leftarrow 0$
}

\textit{// transitions}\\
\For{$i\gets 2$ \textbf{to} $n$}{
    \For{$j\gets 1$ \textbf{to} $m$}{
        \For{$k\gets 1$ \textbf{to} $j-1$}{
            $cost\leftarrow dp[i-1][k]+(1-Jaccard(t_i,t'_j))$\\
            \If{$cost<dp[i][j]$}{
                $dp[i][j]\leftarrow cost$\;$parent[i][j]\leftarrow k$
            }
        }
    }
}

\textit{// recover match}\\
$match\leftarrow[\,]$\;$j\leftarrow best\_col$\\
\For{$i\gets m$ \textbf{downto} $1$}{
    $match \leftarrow match \cup (i,j)$\;$j\leftarrow parent[i][j]$
}

\textbf{return} $match$

\end{algorithm}

\subsection{Prompts of vision-based LLMs}
\label{subsec:appendix-prompts}
\rone{The prompts for \code{vLLM-C} and \code{vLLM-S} take an image as input, while those for \code{LLM-C} and \code{LLM-S} take text as input.}

\begin{figure}[H]
\vspace{1mm}
\centering
\begin{lstlisting}
Prompt of vLLM-C: 
Please extract ALL key-value pairs from the following image and output only the same JSON template below. For key-value blocks, extract the pairs directly. For tables, output a key-value pair for every cell using the table headers as keys. Do not attempt to interpret the overall structure; simply extract and present all key-value pairs as they appear. 
JSON Template:
{
    "key1" : "value1",
    "key2" : "value2",
}
\end{lstlisting}
\end{figure}

\begin{figure}[H]
\centering
\begin{lstlisting}
Prompt of vLLM-S: 
Please extract all key-value pairs from the following image and output only the same JSON template below. For key-value blocks, extract the pairs directly. For tables, output a key-value pair for every cell using the table headers as keys. 
JSON Template:
[
    {
        "content": [
            {
                "type": "table",
                "content": [
                    {
                        "key1": "value1",
                        "key2": "value2"
                    },
                    {
                        "key1": "value3",
                        "key2": "value4"
                    },
                    ... more key-value pairs ...
                ]
            },
            {
                "type": "kv",
                "content": [
                    {
                        "key1": "value1"
                    },
                    ... more key-value pairs ...
                ]
            }
        ]
    }
]
\end{lstlisting}
\end{figure}

\begin{figure}[H]
\centering
\begin{lstlisting}
Prompt of LLM-C: 
Please extract ALL key-value pairs from the following text and output only the same JSON template below. For key-value blocks, extract the pairs directly. For tables, output a key-value pair for every cell using the table headers as keys.  The input text consists of records generated from the same template, sharing the same fields and similar structures. 
JSON Template:
{
    "key1" : "value1",
    "key2" : "value2",
}
\end{lstlisting}
\end{figure}

\begin{figure}[H]
\centering
\begin{lstlisting}
Prompt of LLM-S: 
Please extract all key-value pairs from the following text and output only the same JSON template below. For key-value blocks, extract the pairs directly. For tables, output a key-value pair for every cell using the table headers as keys. The input text consists of records generated from the same template, sharing the same fields and similar structures. 
JSON Template:
[
    {
        "content": [
            {
                "type": "table",
                "content": [
                    {
                        "key1": "value1",
                        "key2": "value2"
                    },
                    {
                        "key1": "value3",
                        "key2": "value4"
                    },
                    ... more key-value pairs ...
                ]
            },
            {
                "type": "kv",
                "content": [
                    {
                        "key1": "value1"
                    },
                    ... more key-value pairs ...
                ]
            }
        ]
    }
]
\end{lstlisting}
\end{figure}}

\end{document}
\endinput